\documentclass[11pt, a4paper]{article}
\usepackage{amsmath}
\usepackage{amsthm}
\usepackage{amssymb}
\usepackage{amsfonts}
\usepackage{hyperref}
\usepackage{authblk}
\usepackage{geometry}
\geometry{left=3cm,right=3cm,top=3cm,bottom=3cm}

\newtheorem{theorem}{Theorem}
\newtheorem{lemma}[theorem]{Lemma}
\newtheorem{corollary}[theorem]{Corollary}

\newtheorem{proposition}[theorem]{Proposition}
\newtheorem{definition}[theorem]{Definition}

\newtheorem{example}[theorem]{Example}
\newtheorem{remark}[theorem]{Remark}

\usepackage{bm}
\usepackage{amssymb, amsmath}
\usepackage{color}
\usepackage{tikz}
\usepackage[ruled, vlined, linesnumbered]{algorithm2e}

\newcommand{\grobner}{Gr\"{o}bner }
\newcommand{\pset}[1]{\mathcal{#1}}
\newcommand{\field}[1]{\mathbb{#1}}
\newcommand{\fk}{\field{K}}
\newcommand{\point}[1]{\bm{#1}}
\newcommand{\p}[1]{\point{#1}}
\newcommand{\kxring}{\field{K}[\point{x}]}
\newcommand{\kx}{\kxring}
\newcommand{\qnum}{\mathbb{Q}}
\newcommand{\bases}[1]{\langle #1 \rangle}

\DeclareMathOperator{\supp}{supp}
\DeclareMathOperator{\lv}{lv}
\DeclareMathOperator{\ldeg}{ldeg}
\DeclareMathOperator{\ini}{ini}
\DeclareMathOperator{\tail}{tail}
\DeclareMathOperator{\zero}{\mathsf{Z}}
\DeclareMathOperator{\level}{level}
\DeclareMathOperator{\prem}{prem}
\DeclareMathOperator{\pquo}{pquo}
\DeclareMathOperator{\red}{red}
\DeclareMathOperator{\sat}{sat}

\DeclareMathOperator{\pop}{\sf pop}
\DeclareMathOperator{\algwang}{TriDecWang}
\DeclareMathOperator{\algReg}{RegDecSubres}
\DeclareMathOperator{\algSub}{TriDecSubres}
\DeclareMathOperator{\srs}{SubRegSubchain}
\DeclareMathOperator{\algSparse}{SparseRegDec}


\begin{document}

\title{Chordal Graphs in Triangular Decomposition\\
  in Top-Down Style\footnote{This work was partially supported by the National Natural Science Foundation of China (NSFC 11401018 and 11771034)}}

\date{}

\author{Chenqi Mou, Yang Bai, and Jiahua Lai}
\affil{LMIB -- School of Mathematics and Systems Science / \authorcr
Beijing Advanced Innovation Center for Big Data and Brain Computing \authorcr
Beihang University, Beijing 100191, China \authorcr \vspace{2mm}
\{chenqi.mou, yangbai, jiahualai\}@buaa.edu.cn}


\maketitle

\begin{abstract}
  In this paper, we first prove that when the associated graph of a polynomial set is chordal, a particular triangular set computed by a general algorithm in top-down style for computing the triangular decomposition of this polynomial set has an associated graph as a subgraph of this chordal graph. Then for Wang's method and a subresultant-based algorithm for triangular decomposition in top-down style and for a subresultant-based algorithm for regular decomposition in top-down style, we prove that all the polynomial sets appearing in the process of triangular decomposition with any of these algorithms have associated graphs as subgraphs of this chordal graph. These theoretical results can be viewed as non-trivial polynomial generalization of existing ones for sparse Gaussian elimination, inspired by which we further propose an algorithm for sparse triangular decomposition in top-down style by making use of the chordal structure of the polynomial set. The effectiveness of the proposed algorithm for triangular decomposition, when the polynomial set is chordal and sparse with respect to the variables, is demonstrated by preliminary experimental results.
\end{abstract}

  \noindent{\small {\bf Key words: }
Triangular decomposition, chordal graph, top-down style, regular decomposition, sparsity}

\section{Introduction}
\label{sec:intro}

In this paper we establish some underlying connections between graph theory and symbolic computation by studying the changes of associated graphs of polynomial sets in the process of decomposing an arbitrary polynomial set with a chordal associated graph into triangular sets with algorithms in top-down style. The study in this paper is directly inspired by the pioneering work of Cifuentes and Parrilo. In \cite{C2017c} they showed for the first time the connections between chordal graphs and triangular sets when they introduced the concept of chordal networks of polynomial sets and proposed an algorithm for constructing chordal networks based on computation of triangular decomposition. In particular, they found experimentally that for polynomial sets with chordal associated graphs, the algorithms for triangular decomposition due to Wang (e.g., his algorithm for regular decomposition in \cite{w00c}) become more efficient. In this paper, with clarification of the changes of associated graphs of polynomial sets in triangular decomposition in top-down style, we are able to provide a theoretical explanation for their experimental observation (see Remark~\ref{rem:explain}). It is worth mentioning that Cifuentes and Parrilo also studied the connections between chordal graphs and \grobner bases in \cite{C2016e}, but they found that the chordal structures of polynomial sets are destroyed in the process of computing \grobner bases . 

Chordal graphs have been applied to many scientific and engineering problems like existence of perfect phylogeny in reconstruction of evolutionary trees \cite{B74a}. Two of these applications are of particular interest to us and are closely related to the study in this paper: sparse Gaussian elimination and sparse sums-of-squares decomposition. For the former problem, it is shown that the Cholesky factorization of a symmetric positive definite matrix does not introduce new fill-ins if the associated graph of the matrix is chordal, and on the basis of this observation algorithms for sparse Gaussian elimination have been proposed by using the property that the sparsity of the matrix can be kept if the associated graph of the matrix is chordal \cite{P1961t,R1970t,G1994p}. For the latter, structured sparsity arising from polynomial optimization problems is studied and utilized by using the chordal structures, resulting in sparse algorithms for sums-of-squares decomposition of multivariate polynomials \cite{WKKM06s,WM10a,ZFP18s,WLX18e}. 

The underlying ideas of the study in this paper are similar to those in the two successful applications of chordal graphs above: we show that the chordality of associated graphs of polynomial sets is preserved in a few algorithms for triangular decomposition in top-down style, as it is in the Cholesky factorization of symmetric matrices, and we propose a sparse algorithm for triangular decomposition in top-down style based on the chordal structure in a simiar way to what have been done for sparse Gaussian elimination and sparse sums-of-squares decomposition. 

Like the \grobner basis which has been greatly developed in its theory, methods, implementations, and applications \cite{B1965A,F1999A,f03a,FM17,CLO1998U}, the triangular set is another powerful algebraic tool in the study on and computation of polynomials symbolically, especially for elimination theory and polynomial system solving \cite{w86z,GC1992s,k93g,w93e,a99t,W2001E,CM2012a}, with diverse applications \cite{Wu94m,c08c}. The process of decomposing a polynomial set into finitely many triangular sets or systems (probably with additional properties like being regular or normal, etc.) with associated zero and ideal relationships is called triangular decomposition of the polynomial set. Triangular decomposition of polynomial sets can be regarded as polynomial generalization of Gaussian elimination for solving linear equations. 

The top-down strategy in triangular decomposition means that the variables appearing in the input polynomial set are handled in a strictly decreasing order, and it is a common strategy in the design and implementations of algorithms for triangular decomposition. In particular, most algorithms for triangular decomposition due to Wang are in top-down style \cite{w93e,w98d,w00c}. Algorithms for triangular decomposition in top-down style with refinement in the Boolean settings and over finite fields have also been proposed and applied to cryptoanalysis \cite{c08c,g12c,h2011a}. The fact that elimination in it is performed in a strictly decreasing order makes triangular decomposition in top-down style the closest among all kinds of triangular decomposition to Gaussian elimination, in which the elimination of entries in different columns of the matrix is also performed in a strict order. 

In this paper the chordal structures of polynomial sets appearing in the algorithms for triangular decomposition in top-down style are studied. The main contributions of this paper include: 1) Under the conditions that the input polynomial set is chordal and a perfect elimination ordering is used as the variable ordering, we study the influence of general reduction in triangular decomposition in top-down style on the associated graphs of polynomial sets and prove that one particular triangular set computed by algorithms for triangular decomposition in top-down style has an associated graph as a subgraph of the input chordal graph (in Section~\ref{sec:general}). 2) Under the same conditions, we show (in Section~\ref{sec:wang}) that in the process of triangular decomposition with Wang's algorithm, any polynomial set (and thus any of the computed triangular sets) has an associated graph as a subgraph of the input chordal graph. 3) The same results are proved for subresultant-based algorithms for triangular decomposition and regular decomposition in top-down style (in Sections~\ref{sec:subres} and \ref{sec:regular} respectively). 4) The variable sparsity of polynomial sets is defined with their associated graphs, and an effective refinement by using the variable sparsity and chordality of input polynomial sets is proposed to speedup triangular decomposition in top-down style (in Section~\ref{sec:sparse}). This paper is an extension of \cite{MB18o}, and the contributions 3) and 4) listed above are new.

With triangular decomposition in top-down style viewed as polynomial generalization of Gaussian elimination, the contributions listed above are indeed polynomial generalizations of the roles chordal structures play in Gaussian elimination and of algorithms for sparse Gaussian elimination. As one may expect, these polynomial generalizations are highly non-trivial because of the complicated process of triangular decomposition due to various splitting strategies involved in specific algorithms. Furthermore, these contributions reveal theoretical properties of triangular decomposition in top-down style from the view point of graph theory, and we hope this paper can stimulate more study on triangular decomposition by using concepts and methods from graph theory.

\section{Preliminaries}
\label{sec:pre}

Let $\fk$ be a field, and $\fk[x_1, \ldots, x_n]$ be the multivariate polynomial ring over $\fk$ in the variables $x_1, \ldots, x_n$. For the sake of simplicity, we write $(x_1, \ldots, x_n)$ as $\p{x}$, $(x_1, \ldots, x_i)$ as $\p{x}_i$ for some integer $i~(1\leq i < n)$, and $\fk[x_1, \ldots, x_n]$ as $\kx$. 

\subsection{Associated graph and chordal graph}
\label{sec:graph}

For a polynomial $F\in \kx$, define the (variable) \emph{support} of $F$, denoted by $\supp(F)$, to be the set of variables in $\{x_1, \ldots, x_n\}$ which effectively appear in $F$. For a polynomial set $\pset{F}\subset \kx$, its support $\supp(\pset{F}) := \bigcup_{F\in \pset{F}}\supp(F)$. 

\begin{definition}\rm
  Let $\pset{F}$ be a polynomial set in $\kx$. Then the \emph{associated graph} of $\pset{F}$, denoted by $G(\pset{F})$, is an undirected graph $(V, E)$ with the vertex set $V = \supp(\pset{F})$ and the edge set $E = \{(x_i, x_j):  1\leq i\neq j\leq n \mbox{ and } \exists F\in \pset{F} \mbox{ such that }x_i, x_j \in \supp(F)\}$.
\end{definition}

\begin{example}\label{ex:associated} \rm
  The associated graphs of 
  \begin{equation*}
    \begin{split}
      \pset{P} &= \{x_2+x_1, x_3+x_1, x_4^2+x_2, x_4^3+x_3, x_5+x_2, x_5+x_3+x_2 \},\\
\pset{Q} &= \{x_2+x_1, x_3+x_1, x_3, x_4^2+x_2, x_4^3+x_3, x_5+x_2\}
    \end{split}
  \end{equation*}
are shown in Figure~\ref{fig:associated}.
    \begin{figure}[ht]
      \centering
\includegraphics[width=3cm,keepaspectratio]{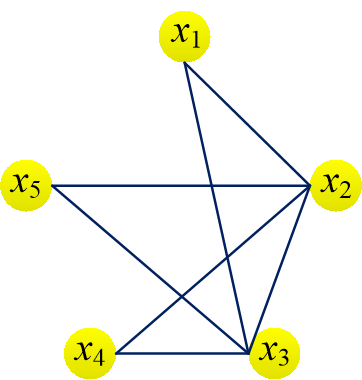}~\qquad
\includegraphics[width=3cm,keepaspectratio]{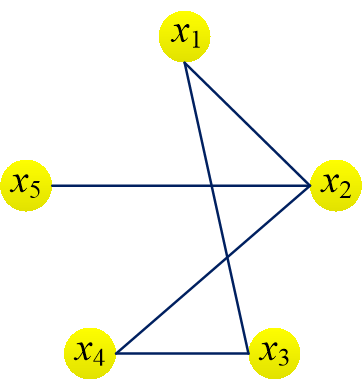}      
      \caption{The associated graphs $G(\pset{P})$ (left) and $G(\pset{Q})$ (right) in Example~\ref{ex:associated}}
      \label{fig:associated}
    \end{figure}
\end{example}

\begin{definition}\label{def:poe}\rm
Let $G = (V, E)$ be a graph with $V = \{x_1, \ldots, x_n\}$. Then an ordering $x_{i_1} < x_{i_2} < \cdots < x_{i_n}$ of the vertices is called a \emph{perfect elimination ordering} of $G$ if for each $j=i_1, \ldots, i_n$, the restriction of $G$ on the following set
\begin{equation}\label{eq:smaller}
  X_j = \{x_j\} \cup \{x_k: x_k < x_j \mbox{ and } (x_k, x_j) \in E\}
\end{equation}
is a clique. A graph $G$ is said to be \emph{chordal} if there exists a perfect elimination ordering of it. 
\end{definition}

An equivalent condition for a graph $G = (V, E)$ to be \emph{chordal} is the following: for any cycle $C$ contained in $G$ of four or more vertices, there is an edge $e\in E \setminus C$ connecting two vertices in $C$. The edge $e$ in this case is called a \emph{chord} of $C$. A chordal graph is also called a triangulated one. For an arbitrary graph $G$, another graph $G'$ is called a \emph{chordal completion} of $G$ if $G'$ is chordal and $G$ is its subgraph. 

From the algorithmic point of view, there exist effective algorithms for testing whether an arbitrary graph is chordal (in case of a chordal graph, a perfect elimination ordering will also be returned) \cite{RTL76a} and for finding a chordal completion of an arbitrary graph \cite{BK08c}, though the problem of finding the minimal chordal completion is NP-hard \cite{ACP87c}. 

\begin{definition}\rm
  A polynomial set $\pset{F} \subset \kx$ is said to be \emph{chordal} if its associated graph $G(\pset{F})$ is chordal. 
\end{definition}

\begin{example}\rm
In Example~\ref{ex:associated} and Figure~\ref{fig:associated}, the associated graph $G(\pset{P})$ is chordal by definition and thus $\pset{P}$ is chordal, while $G(\pset{Q})$ is not. 
\end{example}

\subsection{Triangular set and triangular decomposition}
\label{sec:ts}
Throughout this subsection the variables are ordered as $x_1 < \cdots < x_n$. For an arbitrary polynomial $F\in \kx$, the greatest variable appearing in $F$ is called its \emph{leading variable}, denoted by $\lv(F)$. Let $\lv(F) = x_k$. Write $F = Ix_k^d + R$ with $I\!\in\! \fk[x_1, \ldots, x_{k-1}]$, $R\!\in\! \fk[x_1, \ldots, x_k]$, and $\deg(R, x_k) < d$. Then the polynomials $I$ and $R$ are called the \emph{initial} and \emph{tail} of $F$ and denoted by $\ini(F)$ and $\tail(F)$ respectively, and $d$ is called the \emph{leading degree} of $F$ and denoted by $\ldeg(F)$. For two polynomial sets $\pset{F}, \pset{G}\subset \kx$, the set of common zeros of $\pset{F}$ in $\overline{\fk}^n$ is denoted by $\zero(\pset{F})$, and $\zero(\pset{F} / \pset{G}) := \zero(\pset{F}) \setminus \zero(\prod_{G\in \pset{G}}G)$, where $\overline{\fk}$ is the algebraic closure of $\fk$.

\begin{definition}\label{def:Tset-system}\rm
  An ordered set of non-constant polynomials $\pset{T} = [T_1, \ldots, T_r] \subset \kx$ is called a \emph{triangular set} if $\lv(T_1) < \cdots < \lv(T_r)$.  A pair $(\pset{T}, \pset{U})$ with $\pset{T}, \pset{U} \subset \kx$ is called a \emph{triangular system} if $\pset{T}$ is a triangular set, and for each $i = 2, \ldots, r$ and any $\overline{\p{x}}_{i-1} \in \zero([T_1, \ldots, T_{i-1}] / \pset{U})$, we have $\ini(T_i)(\overline{\p{x}}_{i-1}) \neq 0$. 
\end{definition}

Given a triangular set $\pset{T} = [T_1, \ldots, T_r] \subset \kx$, the \emph{saturated ideal} of $\pset{T}$ is $\sat(\pset{T}) := \bases{\pset{T}}: (\prod_{i=1}^r T_i)$. In particular, for an integer $i~(1\leq i < r)$, $[T_1, \ldots, T_i]$ forms a (truncated) triangular set in $\fk[x_1, \ldots, \lv(T_i)]$, and we denote $\sat_i(\pset{T}) := \sat([T_1, \ldots, T_i])$. For an arbitrary polynomial set $\pset{P} \subset \kx$, we denote $\pset{P}^{(i)} := \{P\in \pset{P}:\, \lv(P) = x_i\}$ for an integer $i~(1\leq i \leq n)$ and denote $\pset{P}^{(0)} := \{P \in \pset{P}: P \in \fk \}$. 

\begin{definition}\rm
  \label{def:reg}
  A triangular set $\pset{T} = [T_1, \ldots, T_r] \subset \kx$ is said to be \emph{regular} or called a \emph{regular set} if for each $i = 2, \ldots, r$, the canonical image of $\ini(T_i)$ in $\fk[x_1, \ldots, \lv(T_{i-1})] / \sat_{i-1}(\pset{T})$ is neither zero nor a zero-divisor. A triangular system $(\pset{T}, \pset{U})$ is called a \emph{regular system} if for each $i=1, \ldots, n$, the following conditions hold: (a) either $\pset{T}^{(i)} = \emptyset$ or $\pset{U}^{(i)} = \emptyset$; (b) for any $\overline{\p{x}}_{i-1} \in \zero([T_1, \ldots, T_{i-1}] / \bigcup_{j=1}^{i-1}\pset{U}^{(j)})$ and $U \in \pset{U}^{(i)}$, we have $\ini(U)(\overline{\p{x}}_{i-1}) \neq 0$. 
\end{definition}

The definitions above of regular set and regular system are algebraic (in the language of ideals) and geometric (in the language of zeros) respectively. The connections between regular sets and regular systems have been clarified in \cite{w00c,W2001E}.

\begin{definition}\label{def:DintoSet}\rm
 Let $\pset{F}\subset \kx$ be a polynomial set. Then a finite number of triangular sets $\pset{T}_1, \ldots, \pset{T}_s \subset \kx$ (triangular systems $(\pset{T}_1,\pset{U}_1), \ldots, (\pset{T}_s, \pset{U}_s)$ respectively) are called a \emph{triangular decomposition} of $\pset{F}$ if the zero relationship $\zero(\pset{F}) = \bigcup_{i=1}^s \zero(\pset{T}_i / \ini(\pset{T}_i))$ holds, where $\ini(\pset{T}_i) := \{\ini(T): T \in \pset{T}_i\}$ ($\zero(\pset{F}) = \bigcup_{i=1}^s \zero(\pset{T}_i / \pset{U}_i)$ holds respectively). In particular, a triangular decomposition is called a \emph{regular decomposition} if each of its triangular sets or systems is regular. 
\end{definition}

When no ambiguity occurs, the process for computing the triangular decomposition of a polynomial set $\pset{F}$ is also called \emph{triangular decomposition} of $\pset{F}$. As one may find from Definitions~\ref{def:Tset-system} and \ref{def:DintoSet}, triangular systems are generalization of triangular sets. For a triangular system $(\pset{T}, \pset{U})$, $\pset{T}$ is a triangular set which represents the equations $\pset{T}=0$, while $\pset{U}$ is a polynomial set which represents the inequations $\pset{U}\neq 0$.

There exist many algorithms for decomposing polynomial sets into triangular sets or systems with different properties. One of the main strategies for designing such algorithms for triangular decomposition is to carry out reduction on polynomials containing the greatest (unprocessed) variable until there is only one such polynomial left, at the same time producing new polynomials whose leading variables are strictly smaller than the currently processed variable. 

For an arbitrary polynomial set $\pset{P} \in \kx$, the smallest integer $i~(0\leq i \leq n)$ such that $\#\pset{P}^{(j)} = 0$ or $1$ for each $j=i+1, \ldots, n$ is called the \emph{level} of $\pset{P}$ and denoted by $\level(\pset{P})$. Obviously a polynomial set $\pset{P}$ containing no constant forms a triangular set if $\level(\pset{P})=0$. 

Let $\pset{F}$ be a polynomial set in $\kx$ and $\Phi$ be a set of pairs of polynomial sets, initialized with $\{(\pset{F}, \emptyset)\}$. Then an algorithm for computing triangular decomposition of $\pset{F}$ is said to be in \emph{top-down style} if for each polynomial set $(\pset{P}, \pset{Q}) \in \Phi$ with $\level(\pset{P}) = k > 0$, this algorithm handles the polynomials in $\pset{P}^{(k)}$ and $\pset{Q}^{(k)}$ to produce finitely many polynomials sets $\pset{P}_1, \ldots, \pset{P}_s$ and $\pset{Q}_1, \ldots, \pset{Q}_s$ such that the following conditions hold:
\begin{enumerate}
\item[(a)] $\zero(\pset{P} / \pset{Q}) = \bigcup_{i=1}^{s}\zero(\pset{P}_i / \pset{Q}_i)$;
\item[(b)] for each $i=1, \ldots, s$, $\pset{P}_i^{(j)} = \pset{P}^{(j)}$ and $\pset{Q}_i^{(j)} = \pset{Q}^{(j)}$ for $j=k+1, \ldots, n$; 

\item[(c)] there exists some integer $\ell~(1\leq \ell \leq s)$ such that $\#\pset{P}_{\ell}^{(k)}= 0$ or $1$, and the other $(\pset{P}_i, \pset{Q}_i)~(i \neq \ell)$ are put into $\Phi$ for later computation. 
\end{enumerate}

In this paper we are interested mainly in algorithms for triangular decomposition in top-down style. Note that the above definition, compared with the corresponding one in \cite{MB18o}, imposes additional conditions on the polynomial sets $Q_1, \ldots, Q_s$ representing inequations, for the authors find that it is difficult to study the polynomial sets $\pset{P}_1, \ldots, \pset{P}_s$ alone when the interactions between $\pset{P}$ and $\pset{Q}$ occur in certain algorithms (see Section~\ref{sec:regular} for more details).

\subsection{Pseudo division and subresultant regular subchain}
\label{sec:srs}

Two commonly used algebraic operations on multivariate polynomials to perform reduction in algorithms for triangular decomposition are pseudo division and computation of the resultant of two polynomials. The algorithms for triangular decomposition in top-down style studied in this paper rely heavily on these two algebraic operations. 

For any two polynomials $F, G \in \kx$, there exist polynomials $Q, F \in \kx$ and an integer $s~(0\leq s \leq \deg(F, \lv(G))+\ldeg(G)-1)$ such that $\ini(G)^s F = QG + R$ and $\deg(R, \lv(G)) < \ldeg(G)$. Furthermore, if $s$ is fixed, then $Q$ and $R$ are unique. The process above of computing $Q$ and $R$ from $F$ and $G$ is called the \emph{pseudo division} of $F$ with respect to $G$, and the polynomials $Q$ and $R$ here are called the \emph{pseudo quotient} and \emph{pseudo remainder} of $F$ with respect to $G$ and denoted by $\pquo(F, G)$ and $\prem(F, G)$ respectively. 

Suppose further that $x_k \in \supp(F) \cap \supp(G)$. Write
$F=\sum_{i=0}^ma_i\,x_k^i$ and $G=\sum_{j=0}^lb_j\,x_k^j$ with $a_i, b_j \in \fk[x_1, \ldots, x_{k-1}, x_{k+1}, \ldots, x_n]$. Denote by $\mathbf{M}$ the sylvester matrix of $F$ and $G$ with respect to $x_k$. Then the determinant $|\mathbf{M}|$ is called the \emph{Sylvester resultant} of $F$ and $G$ with respect to $x_k$. 

For two integers $i, j~(0\leq i \leq j < l)$, define $\mathbf{M}_{ij}$ to be the submatrix of $\mathbf{M}$ obtained by deleting the last $j$ rows of $F$'s coefficients, the last $j$ rows of $G$'s coefficients, and the last $2j + 1$ columns except the $(m+l-i-j)$-th one. Then the polynomial $S_j = \sum_{i=0}^j |\mathbf{M}_{ij}|x_k^i$ is called the $j$th \emph{subresultant} of $F$ and $G$ with respect to $x_k$. In particular, the $j$th subresultant $S_j$ is said to be \emph{regular} if $\deg(S_j, x_k) = j$. 

\begin{definition}\rm
  Let $F, G \in \kx$ be two polynomials such that $m = \deg(F, x_k) \geq \deg(G, x_k) =l$, and $S_j \in \kx$ be the $j$th resultant of $F$ and $G$ with respect to $x_k$ for $j = 0, \ldots, \mu-1$, where $\mu := m-1$ when $m > l$ and $\mu := l$ otherwise. Then the sequence $F, G, S_{\mu-1}, S_{\mu-2}, \ldots, S_0$ is called the \emph{subresultant chain} of $F$ and $G$ with respect to $x_k$. Furthermore, let $S_{d_1}, \ldots, S_{d_r}$ be the regular subresultants in $S_{\mu-1}, \ldots, S_0$ with $d_1 > \cdots > d_r$. Then the sequence $F, G, S_{d_1}, \ldots, S_{d_r}$ is called the \emph{subresultant regular subchain} of $F$ and $G$ with respect to $x_k$. 
\end{definition}

There exist strong connections between the subresultant chain and the greatest common divisor of two polynomials. The reader is referred to \cite[Chap.~7]{M2012A} for more details on this.

\section{General triangular decomposition in top-down style}
\label{sec:general}

In this section, the graph structures of polynomial sets in general algorithms for triangular decomposition in top-down style are studied when the input polynomial set is chordal. We start this section with the connections between the associated graphs of a triangular set reduced from a chordal polynomial set and the chordal associated graph. 

\begin{proposition}\label{prop:woReduction}
Let $\pset{P} \subset \kx$ be a chordal polynomial set with $x_1 < \cdots < x_n$ as one perfect elimination ordering of $G(\pset{P})$. For $i=1, \ldots, n$, let $T_i\in \kx$ be a polynomial such that $\lv(T_i) = x_i$ and $\supp(T_i) \subset \supp(\pset{P}^{(i)})$ ($T_i$ is set null if $\pset{P}^{(i)} = \emptyset$). Then $\pset{T} = [T_1, \ldots, T_n]$ is a triangular set, and $G(\pset{T}) \subset G(\pset{P})$. In particular, if $\supp(T_i) = \supp(\pset{P}^{(i)})$ for $i=1, \ldots, n$, then $G(\pset{T}) = G(\pset{P})$.
\end{proposition}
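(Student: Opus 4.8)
The plan is to separate the two assertions: first that $\pset{T}$ is a triangular set, which is immediate from the construction, and then that $G(\pset{T}) \subseteq G(\pset{P})$, where the perfect elimination ordering does the real work. Each non-null $T_i$ satisfies $\lv(T_i) = x_i$ and is therefore non-constant, and after discarding the null entries the leading variables of the survivors are strictly increasing because $x_1 < \cdots < x_n$; thus $\pset{T}$ meets Definition~\ref{def:Tset-system} with no further argument.

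For the graph inclusion I would first dispose of the vertices: since $\supp(T_i) \subseteq \supp(\pset{P}^{(i)}) \subseteq \supp(\pset{P})$, we have $\supp(\pset{T}) \subseteq \supp(\pset{P})$. The crux is the edges, and here I would isolate the key claim that $\supp(\pset{P}^{(i)}) \subseteq X_i$ for each $i$, with $X_i$ as in \eqref{eq:smaller}. Indeed, if $x_a \in \supp(\pset{P}^{(i)})$ then $x_a \in \supp(P)$ for some $P \in \pset{P}^{(i)}$; since $\lv(P) = x_i$ we have $x_a \leq x_i$, so either $x_a = x_i$ (hence $x_a \in X_i$ trivially) or $x_a < x_i$, in which case $x_a$ and $x_i$ both lie in $\supp(P)$, making $(x_a, x_i)$ an edge of $G(\pset{P})$ and again placing $x_a \in X_i$. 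The perfect elimination ordering (Definition~\ref{def:poe}) then guarantees that the restriction of $G(\pset{P})$ to $X_i$ is a clique. Consequently any edge $(x_a, x_b)$ of $G(\pset{T})$, which arises from some $T_i$ with $x_a, x_b \in \supp(T_i) \subseteq \supp(\pset{P}^{(i)}) \subseteq X_i$, is an edge of $G(\pset{P})$ by the clique property. This yields $G(\pset{T}) \subseteq G(\pset{P})$.

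For the final ``in particular'' statement, assuming $\supp(T_i) = \supp(\pset{P}^{(i)})$ for all $i$, I would prove the reverse inclusion $G(\pset{P}) \subseteq G(\pset{T})$. The vertex sets agree because $\supp(\pset{P}) = \bigcup_i \supp(\pset{P}^{(i)}) = \bigcup_i \supp(T_i) = \supp(\pset{T})$, the constant part $\pset{P}^{(0)}$ contributing nothing. For the edges, any edge $(x_a, x_b)$ of $G(\pset{P})$ is witnessed by a single polynomial $P$ with $x_a, x_b \in \supp(P)$; setting $x_i = \lv(P)$ gives $x_a, x_b \in \supp(\pset{P}^{(i)}) = \supp(T_i)$, so $(x_a, x_b)$ is an edge of $G(\pset{T})$. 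Combined with the inclusion already established, this gives $G(\pset{T}) = G(\pset{P})$.

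The main obstacle is exactly the edge inclusion in the second step. A priori, two variables lying in $\supp(\pset{P}^{(i)})$ may occur in \emph{different} polynomials of $\pset{P}^{(i)}$, so there need be no single polynomial witnessing an edge between them, and a naive argument collapses here. It is precisely the containment $\supp(\pset{P}^{(i)}) \subseteq X_i$ together with the clique property of the perfect elimination ordering that supplies the required edge; without chordality this step has no reason to hold, which is why the hypothesis that $x_1 < \cdots < x_n$ is a perfect elimination ordering is indispensable.
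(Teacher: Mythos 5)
Your proof is correct and follows essentially the same route as the paper's: both arguments place the two endpoints of a $G(\pset{T})$-edge into $\supp(\pset{P}^{(k)})$, deduce edges to the leading variable $x_k$, and invoke the clique property of the perfect elimination ordering from Definition~\ref{def:poe}, with the same witness-polynomial argument for the ``in particular'' equality. Your explicit isolation of the containment $\supp(\pset{P}^{(i)}) \subseteq X_i$ is just a cleaner packaging of the step the paper performs inline, so no substantive difference.
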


\begin{proof}
It is straightforward that $\pset{T}$ is a triangular set because $\lv(T_i) = x_i$ if $\pset{P}^{(i)}\neq \emptyset$ for $i=1, \ldots, n$. 

For any edge $(x_i,x_j) \in G(\pset{T})$, there exists an integer $k~(i, j \leq k \leq n)$ such that $x_i, x_j \in \supp(T_k)$. Then $x_i, x_j \in \supp(\pset{P}^{(k)})$, and thus $(x_i, x_k) \in G(\pset{P})$ and $(x_j, x_k) \in G(\pset{P})$. Since $G(\pset{P})$ is chordal with $x_1 < \ldots < x_n$ as a perfect elimination ordering and $x_i \leq x_k$, $x_j \leq x_k$, we know that $(x_i, x_j) \in G(\pset{P})$ by Definition~\ref{def:poe}. This proves the inclusion $G(\pset{T}) \subset G(\pset{P})$.

In the case when $\supp(T_i) = \supp(\pset{P}^{(i)})$ for $i=1, \ldots, n$, next we show the inclusion $G(\pset{T}) \supset G(\pset{P})$, which implies the equality $G(\pset{T}) = G(\pset{P})$. For any $(x_i, x_j) \in G(\pset{P})$, there exists an integer $k$ and a polynomial $P$ such that $x_i, x_j \in \supp(P)$ with $P \in \pset{P}^{(k)}$. Since $\supp(P) \subseteq \supp(T_k)$, we know that $x_i, x_j \in \supp(T_k)$ and thus $(x_i, x_j) \in G(\pset{T})$. 
\end{proof}

\begin{example}\label{ex:ts}\rm
  Proposition~\ref{prop:woReduction} does not necessarily hold in general if the polynomial set $\pset{P}$ is not chordal. Consider the same $\pset{Q}$ as in Example~\ref{ex:associated} whose associated graph $G(\pset{Q})$ is not chordal. Let 
$$\pset{T} = [x_2+x_1, x_3+x_1, -x_2x_4+x_3, x_5+x_2].$$
Then one can check that for $i = 2, \ldots, 5$, $\supp(\pset{T}^{(i)}) = \supp(\pset{Q}^{(i)})$, but the associated graph $G(\pset{T})$, as shown in Figure~\ref{fig:ts}, is not a subgraph of $G(\pset{Q})$. 
    \begin{figure}[ht]
      \centering
\includegraphics[width=3cm,keepaspectratio]{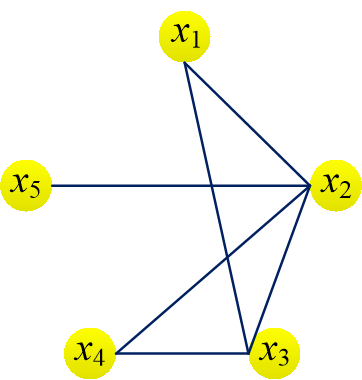}      
      \caption{The associated graph $G(\pset{T})$ in Example~\ref{ex:ts}}
\label{fig:ts}
    \end{figure}
\end{example}

The following theorem relates the associated graph of a chordal polynomial set and that of the polynomial set after reduction with respect to one variable.

\begin{theorem}\label{thm:reduction}
Let $\pset{P} \subset \kx$ be a chordal polynomial set such that $\pset{P}^{(n)} \neq \emptyset$ and $x_1 < \ldots < x_n$ is one perfect elimination ordering of $G(\pset{P})$. Let $T \in \kx$ be a polynomial such that $\lv(T) = x_n$ and $\supp(T) \subset \supp(\pset{P}^{(n)})$, and $\pset{R} \subset \kx$ be a polynomial set such that $\supp(\pset{R}) \subset \supp(\pset{P}^{(n)}) \setminus \{x_n\}$. Then for the polynomial set $\tilde{\pset{P}} = \tilde{\pset{P}}^{(1)} \cup \cdots \cup \tilde{\pset{P}}^{(n-1)} \cup \{T\}$, where $\tilde{\pset{P}}^{(k)} = \pset{P}^{(k)} \cup \pset{R}^{(k)}$ for $k=1, \ldots, n-1$, we have $G(\tilde{\pset{P}}) \subset G(\pset{P})$. In particular, if $\supp(T) = \supp(\pset{P}^{(n)})$, then $G(\tilde{\pset{P}}) = G(\pset{P})$.
\end{theorem}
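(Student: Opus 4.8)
The plan is to first isolate the one structural fact that does all the work: the variable set $\supp(\pset{P}^{(n)})$ is a clique in $G(\pset{P})$. To see this, I would fix any two variables $x_i, x_j \in \supp(\pset{P}^{(n)})$ and note that, since every polynomial in $\pset{P}^{(n)}$ has leading variable $x_n$, there exist $P_1, P_2 \in \pset{P}^{(n)}$ with $x_i \in \supp(P_1)$ and $x_j \in \supp(P_2)$; hence $(x_i, x_n), (x_j, x_n) \in G(\pset{P})$. Both $x_i$ and $x_j$ therefore lie in the set $X_n$ of Definition~\ref{def:poe}, which is a clique because $x_1 < \cdots < x_n$ is a perfect elimination ordering, so $(x_i, x_j) \in G(\pset{P})$. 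This is the same argument used in the proof of Proposition~\ref{prop:woReduction}, now applied only at the top variable $x_n$.

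With the clique fact in hand, the forward inclusion $G(\tilde{\pset{P}}) \subset G(\pset{P})$ follows by tracing each edge back to the polynomial that produced it. I would take an arbitrary edge $(x_i, x_j) \in G(\tilde{\pset{P}})$ and split into cases according to which polynomial of $\tilde{\pset{P}}$ carries both $x_i$ and $x_j$ in its support. If that polynomial lies in some $\pset{P}^{(k)} \subset \pset{P}$, the edge is already in $G(\pset{P})$. If it is $T$ or lies in $\pset{R}$, then its support is contained in $\supp(\pset{P}^{(n)})$ (using $\supp(T) \subset \supp(\pset{P}^{(n)})$ and $\supp(\pset{R}) \subset \supp(\pset{P}^{(n)}) \setminus \{x_n\}$), so $x_i, x_j$ both lie in the clique and $(x_i, x_j) \in G(\pset{P})$ by the clique fact. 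This exhausts all cases.

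For the ``in particular'' statement, assuming $\supp(T) = \supp(\pset{P}^{(n)})$, I would establish the reverse inclusion $G(\pset{P}) \subset G(\tilde{\pset{P}})$. Given an edge $(x_i, x_j) \in G(\pset{P})$ carried by some $P \in \pset{P}^{(k)}$: if $k < n$ then $P \in \pset{P}^{(k)} \subset \tilde{\pset{P}}^{(k)} \subset \tilde{\pset{P}}$ and the edge survives directly; if $k = n$ then $x_i, x_j \in \supp(P) \subset \supp(\pset{P}^{(n)}) = \supp(T)$, so the edge is carried by $T \in \tilde{\pset{P}}$. Combined with the forward inclusion this gives the equality $G(\tilde{\pset{P}}) = G(\pset{P})$.

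I do not expect a serious obstacle here. The only subtle point, and the step I would be most careful about, is the clique fact, since it is the one place where the chordality hypothesis and the perfect elimination ordering are actually used; everything else is bookkeeping about supports. The reason the freshly produced polynomials in $\pset{R}$ cannot create any new edge is precisely that their variables are confined to the clique $\supp(\pset{P}^{(n)})$, whose edges are all present in $G(\pset{P})$ already, in direct analogy with the no-fill-in property of Cholesky factorization on a chordal sparsity pattern.
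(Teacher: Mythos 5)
Your proof is correct and follows essentially the same route as the paper's: the paper also traces each edge of $G(\tilde{\pset{P}})$ back to a polynomial in $\pset{P}^{(k)}$, in $\pset{R}$, or equal to $T$, and in the latter two cases invokes the edges $(x_i,x_n),(x_j,x_n)$ together with the perfect elimination ordering, which is exactly your clique fact stated inline rather than as a separate lemma; your reverse inclusion for the case $\supp(T)=\supp(\pset{P}^{(n)})$ likewise matches the paper's observation that $G(\pset{P}^{(k)})\subset G(\tilde{\pset{P}}^{(k)})$ for $k<n$ and $G(\pset{P}^{(n)})\subset G(T)$. Isolating the clique property of $\supp(\pset{P}^{(n)})$ up front is a clean organizational choice but not a mathematically different argument.
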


\begin{proof}
To prove the inclusion $G(\tilde{\pset{P}}) \subset G(\pset{P})$, it suffices to show that for each edge $(x_i, x_j) \in G(\tilde{\pset{P}})$, we have $(x_i, x_j) \in G(\pset{P})$. For an arbitrary edge $(x_i, x_j) \in G(\tilde{\pset{P}})$, there exists a polynomial $P\in \tilde{\pset{P}}$ and an integer $k~(i, j \leq k \leq n)$ such that $x_i, x_j \in \supp(P)$ and $P\in \tilde{\pset{P}}^{(k)}$. 

If $k=n$, then $x_i, x_j \in \supp(T)$, and by $\supp(T) \subset \supp(\pset{P}^{(n)})$ we have $x_i, x_j \in \supp(\pset{P}^{(n)})$. This implies that $(x_i, x_n), (x_j, x_n) \in G(\pset{P}^{(n)}) \subset G(\pset{P})$ and by the chordality of $G(\pset{P})$ we have $(x_i, x_j) \in G(\pset{P})$.

Else if $k<n$, then by $\tilde{\pset{P}}^{(k)} = \pset{P}^{(k)} \cup \pset{R}^{(k)}$ there are two cases for $P$ accordingly: when $P\in \pset{P}^{(k)} \subset \pset{P}$, clearly $(x_i, x_j) \in G(\pset{P})$; when $P\in \pset{R}^{(k)}$, we have $x_i, x_j \in \supp(\pset{R}^{(k)}) \subset \supp(\pset{P}^{(n)})$, and thus $(x_i, x_n), (x_j, x_n) \in G(\pset{P}^{(n)}) \subset G(\pset{P})$, and the chordality $G(\pset{P})$ implies $(x_i, x_j) \in G(\pset{P})$. 

In particular, if $\supp(T) = \supp(\pset{P}^{(n)})$, then by $G(\pset{P}^{(k)}) \subset G(\tilde{\pset{P}}^{(k)})$ for $k=1, \ldots, n-1$ and $G(\pset{P}^{(n)})\subset G(T)$ we have $G(\pset{P}) \subset G(\tilde{\pset{P}})$. This proves the equality $G(\tilde{\pset{P}}) = G(\pset{P})$.
\end{proof}

\begin{example}\label{ex:oneReduction}\rm
  Let $\pset{P}$ be the chordal polynomial set as in Example~\ref{ex:associated}. Then $\pset{P}^{(5)} = \{x_5+x_2, x_5+x_3+x_2\}$. If we take $T = x_5+x_2$, and $\pset{R} = \{\prem(x_5+x_3+x_2, x_5+x_2)\} = \{x_3\}$, then $\tilde{\pset{P}}$ equals $\pset{Q}$ in Example~\ref{ex:associated}, and $G(\tilde{\pset{P}})$ is a (strict) subgraph of $G(\pset{P})$; If we take $T = x_5+x_3+x_2$, and $\pset{R} = \{\prem(x_5+x_2, x_5+x_3+x_2)\} = \{-x_3\}$, then $\supp(T) = \supp(\pset{P}^{(5)})$ and thus $G(\tilde{\pset{P}}) = G(\pset{P})$.
\end{example}

Next we introduce some notations to formulate the reduction process in Theorem~\ref{thm:reduction}. Denote the power set of a set $S$ by $2^S$. For an integer $i~(1\leq i \leq n)$, let $f_i$ be a mapping
 \begin{equation}\label{eq:mapping}
   \begin{split}
 f_i: 2^{\fk[\p{x}_i] \setminus \fk[\p{x}_{i-1}]} &\rightarrow (\fk[\p{x}_i] \setminus \fk[\p{x}_{i-1}]) \times 2^{\fk[\p{x}_{i-1}]}\\
 \pset{P} &\mapsto (T, \pset{R})     
   \end{split}
 \end{equation}
 such that $\supp(T) \subset \supp(\pset{P}) $ and $\supp(\pset{R}) \subset \supp(\pset{P})$, where $\fk[\p{x}_0]$ is understood as $\fk$. For a polynomial set $\pset{P} \subset \kx$ and a fixed integer $i~(1\leq i \leq n)$, suppose that $(T_i, \pset{R}_i) = f_i(\pset{P}^{(i)})$ for some $f_i$ as stated above. Now define the result of reduction with respect to $x_i$ as the polynomial set $\red_i(\pset{P})$ by defining all its subsets $\red_i(\pset{P})^{(j)}$ for $j=1, \ldots, n$ as follows.
\begin{equation}
  \label{eq:reduction}
\red_i(\pset{P})^{(j)} := \left\{
    \begin{tabular}[l]{ll}
      $\pset{P}^{(j)}$, & if $j > i$  \\
      $\{T_i\}$, & if $j=i$ \\
      $\pset{P}^{(j)} \cup \pset{R}_i^{(j)}$, & if $j<i$
    \end{tabular}
\right.
\end{equation}
Furthermore, denote 
\begin{equation}
  \label{eq:redBar}
  \overline{\red}_i(\pset{P}) := \red_{i}(\red_{i+1}(\cdots (\red_n(\pset{P}))\cdots))
\end{equation}
for simplicity, and the polynomial set $\overline{\red}_i(\pset{P})$ is the result of successive reduction with respect to $x_n, x_{n-1}, \ldots, x_i$. Following the above terminologies, the conclusions of Theorem~\ref{thm:reduction} can be reformulated as: $G(\red_n(\pset{P})) \subset G(\pset{P})$, and the equality holds if $\supp(T_n) = \supp(\pset{P}^{(n)})$.


Indeed, the reduction process above is commonly used in algorithms for triangular decomposition in top-down style, and the mapping $f_i$ in \eqref{eq:mapping} is abstraction of specific reductions used in different kinds of algorithms for triangular decomposition \cite{JLW2013a}. For example, one specific kind of such reduction is performed by using pseudo divisions, and in this case $\pset{R}$ in \eqref{eq:mapping} consists of pseudo remainders which do not contain $x_i$.

\begin{proposition}\label{prop:left}
Let $\pset{P} \subset \kx$ be a chordal polynomial set with $x_1 < \cdots < x_n$ as one perfect elimination ordering of $G(\pset{P})$. For each $i~(1\leq i \leq n)$, suppose that $(T_i, \pset{R}_i) = f_i(\overline{\red}_{i+1}(\pset{P})^{(i)})$ for some $f_i$ as in \eqref{eq:mapping} and $\supp(T_i) = \supp(\overline{\red}_{i+1}(\pset{P})^{(i)})$, where $\overline{\red}_{n+1}(\pset{P})$ is understood as $\pset{P}$. Then $G(\overline{\red}_{1}(\pset{P})) = G(\pset{P})$. 
\end{proposition}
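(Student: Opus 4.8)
The plan is to prove $G(\overline{\red}_{1}(\pset{P})) = G(\pset{P})$ by induction on the successive reductions, using Theorem~\ref{thm:reduction} as the inductive step. The key observation is that $\overline{\red}_{1}(\pset{P})$ is obtained from $\pset{P}$ by applying $\red_n, \red_{n-1}, \ldots, \red_1$ in turn, and Theorem~\ref{thm:reduction} already handles a single reduction $\red_n$ under the hypothesis that the support of the chosen polynomial $T$ equals $\supp(\pset{P}^{(n)})$, yielding $G(\red_n(\pset{P})) = G(\pset{P})$. So the natural strategy is to chain these equalities: show that each intermediate reduction preserves both the associated graph and the chordality, so that Theorem~\ref{thm:reduction} can be reapplied at the next step.

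First I would set up the induction. Define $\pset{P}_{n+1} := \pset{P}$ and $\pset{P}_{i} := \red_i(\pset{P}_{i+1}) = \overline{\red}_{i}(\pset{P})$ for $i = n, n-1, \ldots, 1$, so that $\pset{P}_1 = \overline{\red}_1(\pset{P})$. The inductive claim to maintain is that for each $i$, the graph equality $G(\pset{P}_i) = G(\pset{P})$ holds \emph{and} $\pset{P}_i$ is chordal with $x_1 < \cdots < x_n$ still a perfect elimination ordering of $G(\pset{P}_i)$. The base case $i = n+1$ is the hypothesis on $\pset{P}$ itself. For the inductive step from $i+1$ to $i$, I would apply Theorem~\ref{thm:reduction} with the roles of the variables shifted: $\red_i$ reduces with respect to $x_i$ rather than $x_n$, so I would invoke the theorem on the truncated polynomial set living in $\fk[\p{x}_i]$, namely on those polynomials of $\pset{P}_{i+1}$ with leading variable at most $x_i$, treating $x_i$ as the top variable. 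The hypothesis $\supp(T_i) = \supp(\overline{\red}_{i+1}(\pset{P})^{(i)}) = \supp(\pset{P}_{i+1}^{(i)})$ is exactly the condition needed to get the equality rather than a mere inclusion.

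The point requiring care, and the main obstacle, is justifying that applying $\red_i$ only to the level-$i$ polynomials does not disturb the higher levels and that the chordality and perfect elimination ordering are inherited at each stage. Concretely, $\red_i$ leaves $\pset{P}_{i+1}^{(j)}$ unchanged for $j > i$ by the first line of \eqref{eq:reduction}, replaces level $i$ by the single polynomial $T_i$, and augments lower levels by the pseudo-remainder set $\pset{R}_i$ whose support lies in $\supp(\pset{P}_{i+1}^{(i)}) \setminus \{x_i\}$. Since by the inductive hypothesis $G(\pset{P}_{i+1}) = G(\pset{P})$ is chordal with the same perfect elimination ordering, Theorem~\ref{thm:reduction} (applied in the ring $\fk[\p{x}_i]$, where the relevant vertices are $x_1, \ldots, x_i$) gives $G(\pset{P}_i) = G(\pset{P}_{i+1})$, and combining with the inductive hypothesis yields $G(\pset{P}_i) = G(\pset{P})$. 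Because the graph is unchanged, $\pset{P}_i$ remains chordal with the same perfect elimination ordering, closing the induction. I would be careful to verify that the new polynomials in $\pset{R}_i$ genuinely introduce no edges outside $G(\pset{P})$ even as they land in lower levels, but this is precisely what the equality case of Theorem~\ref{thm:reduction} guarantees, since $\supp(\pset{R}_i) \subset \supp(\pset{P}_{i+1}^{(i)})$ and the chordality collapses any such support into a clique already present in $G(\pset{P})$. Taking $i = 1$ at the end of the induction gives $G(\overline{\red}_1(\pset{P})) = G(\pset{P})$, as required.
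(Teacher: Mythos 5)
Your proposal is correct and follows essentially the same route as the paper, whose entire proof is the single observation that repeated use of Theorem~\ref{thm:reduction} yields the chain $G(\pset{P}) = G(\red_n(\pset{P})) = G(\overline{\red}_{n-1}(\pset{P})) = \cdots = G(\overline{\red}_1(\pset{P}))$ — valid because in the equality case each reduction leaves the associated graph, and hence the chordality and the perfect elimination ordering, unchanged, which is exactly the invariant you maintain. One small caution: when you re-invoke the theorem at level $i$ ``on the truncated polynomial set in $\fk[\p{x}_i]$,'' the truncated graph need not itself be chordal (an edge among $x_1, \ldots, x_i$ may be contributed only by a polynomial of higher level), so the clean justification is the one your final paragraph in fact supplies: every new support lies in $\supp(\overline{\red}_{i+1}(\pset{P})^{(i)})$, i.e.\ in the closed neighborhood of $x_i$ in the \emph{full} graph $G(\overline{\red}_{i+1}(\pset{P})) = G(\pset{P})$, which the perfect elimination ordering collapses into a clique already present in $G(\pset{P})$.
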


\begin{proof}
Repeated use of Theorem~\ref{thm:reduction} implies
$$G(\pset{P}) = G(\red_n(\pset{P})) = G(\overline{\red}_{n-1}(\pset{P})) = \cdots = G(\overline{\red}_1(\pset{P})),$$
and the conclusion follows. 
\end{proof}


Proposition~\ref{prop:left} holds because after every reduction $G(\overline{\red}_i(\pset{P}))$ remains the same as the chordal graph $G(\pset{P})$, and thus the hypotheses of Theorem~\ref{thm:reduction} remain satisfied. If we weaken the condition $\supp(T_i) = \supp(\overline{\red}_{i+1}(\pset{P})^{(i)})$ in Proposition~\ref{prop:left} to $\supp(T_i) \subset \supp(\overline{\red}_{i+1}(\pset{P})^{(i)})$, then in general we will not have
$$G(\overline{\red}_1(\pset{P})) \subset \cdots \subset G(\overline{\red}_{n-1}(\pset{P})) \subset G(\red_n(\pset{P})) \subset G(\pset{P}),$$
as shown by the following example (though the last inclusion always holds because $G(\pset{P})$ is chordal). 

\begin{example}\label{ex:successiveInc}\rm
Let us continue with Example~\ref{ex:oneReduction} with $\pset{P}$ and $\pset{Q} = \red_5(\pset{P})$, where $G(\pset{Q}) \subsetneq G(\pset{P})$. Take 
\begin{equation*}
 T_4 = \prem(x_4^3+x_3, x_4^2+x_2) = -x_2x_4+x_3, \quad
\pset{R}_4 = \{\prem(x_4^2+x_2, -x_2x_4+x_3)\} = \{x_3^2-x_2^3\},
\end{equation*}
then 
$$\pset{Q}' := \overline{\red}_{4}(\pset{P}) = \{x_2+x_1, x_3+x_1, x_3^2-x_2^3, x_3, -x_2x_4+x_3, x_5+x_2\}.$$
The associated graph $G(\pset{Q}')$ is shown below. Note that $G(\pset{Q}') \not \subset G(\pset{Q})$ but $G(\pset{Q}') \subset G(\pset{P})$. 
    \begin{figure}[ht]
      \centering
\includegraphics[width=3cm,keepaspectratio]{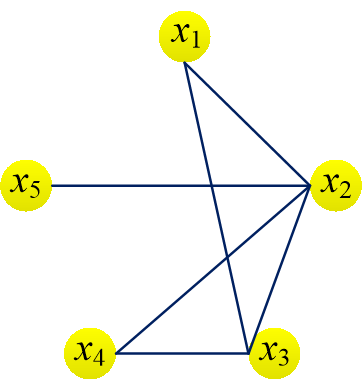}
      \caption{The associated graph $G(\pset{Q}')$ in Example~\ref{ex:successiveInc}}
      \label{fig:needCompletion}
    \end{figure}
\end{example}

Despite of this example where successive inclusions of the associated graphs in the reduction chain does not hold, it can be proved that for each $i=n, \ldots, 1$, $G(\overline{\red}_i(\pset{P}))$ is a subgraph of the original chordal graph $G(\pset{P})$.

\begin{lemma}\label{prop:subgraph}
  Let $\pset{P} \subset \kx$ be a chordal polynomial set with $x_1 < \cdots < x_n$ as one perfect elimination ordering of $G(\pset{P})$ and $\overline{\red}_i(\pset{P})$ be as defined in \eqref{eq:redBar} for $i=n, \ldots, 1$. Then for each $i=n, \ldots, 1$ and any two variables $x_p$ and $x_q$, if there exists an integer $k$ such that $x_p, x_q \in \supp(\overline{\red}_i(\pset{P})^{(k)})$, then $(x_p, x_q) \in G(\pset{P})$. 
\end{lemma}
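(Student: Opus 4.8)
The plan is a downward induction on $i$ from $n$ to $1$, taking as base case the assertion for $\pset{P}$ itself, which I record as the case $i=n+1$ under the convention $\overline{\red}_{n+1}(\pset{P}) = \pset{P}$. Write $Q(i)$ for the claim that whenever $x_p \neq x_q$ both lie in $\supp(\overline{\red}_i(\pset{P})^{(k)})$ for some $k$, the pair $(x_p,x_q)$ is an edge of $G(\pset{P})$; the targets are $Q(n), \ldots, Q(1)$. Throughout, the one recurring device is the chordality of $G(\pset{P})$ in the form: if $x_p, x_q \leq x_k$ and both $(x_p,x_k), (x_q,x_k) \in G(\pset{P})$, then $(x_p,x_q) \in G(\pset{P})$, since $x_p$ and $x_q$ then lie in the clique associated with $x_k$ by the perfect elimination ordering (Definition~\ref{def:poe}).

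For the base case $Q(n+1)$, every polynomial in $\pset{P}^{(k)}$ has leading variable $x_k$, so $x_k$ belongs to its support; hence if $x_p, x_q \in \supp(\pset{P}^{(k)})$ then $(x_p,x_k)$ and $(x_q,x_k)$ are edges of $G(\pset{P})$ by the definition of the associated graph, and chordality gives $(x_p,x_q) \in G(\pset{P})$.

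For the inductive step I assume $Q(i+1)$ and use \eqref{eq:reduction}: passing to $\overline{\red}_i(\pset{P}) = \red_i(\overline{\red}_{i+1}(\pset{P}))$ leaves the levels $k > i$ untouched, replaces level $i$ by $\{T_i\}$ with $\supp(T_i) \subset \supp(\overline{\red}_{i+1}(\pset{P})^{(i)})$, and enlarges each level $k < i$ to $\overline{\red}_{i+1}(\pset{P})^{(k)} \cup \pset{R}_i^{(k)}$, where $\supp(\pset{R}_i) \subset \supp(\overline{\red}_{i+1}(\pset{P})^{(i)})$. For $x_p, x_q \in \supp(\overline{\red}_i(\pset{P})^{(k)})$ the cases $k > i$ and $k = i$ are immediate, since both variables then lie in $\supp(\overline{\red}_{i+1}(\pset{P})^{(k)})$, respectively in $\supp(T_i) \subset \supp(\overline{\red}_{i+1}(\pset{P})^{(i)})$, so $Q(i+1)$ applies. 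When $k < i$, each of $x_p, x_q$ lies in $\supp(\overline{\red}_{i+1}(\pset{P})^{(k)})$ or in $\supp(\pset{R}_i^{(k)}) \subset \supp(\overline{\red}_{i+1}(\pset{P})^{(i)})$; if both fall in the same one of these two sets, $Q(i+1)$ again finishes the argument.

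The main obstacle is the mixed subcase $k < i$ with, say, $x_p \in \supp(\overline{\red}_{i+1}(\pset{P})^{(k)})$ but $x_q$ only available from $\supp(\pset{R}_i^{(k)})$: these two variables need not appear together in any single level of $\overline{\red}_{i+1}(\pset{P})$, so $Q(i+1)$ cannot be applied to them directly. I would route the connection through $x_k$. On one side, $\overline{\red}_{i+1}(\pset{P})^{(k)}$ contains a polynomial with leading variable $x_k$, so $x_p, x_k \in \supp(\overline{\red}_{i+1}(\pset{P})^{(k)})$ and $Q(i+1)$ gives $(x_p,x_k) \in G(\pset{P})$. On the other side, $x_q \in \supp(\pset{R}_i^{(k)})$ forces $x_k \in \supp(\pset{R}_i)$ as the leading variable of the polynomial carrying $x_q$, whence $x_q, x_k \in \supp(\overline{\red}_{i+1}(\pset{P})^{(i)})$ and $Q(i+1)$ gives $(x_q,x_k) \in G(\pset{P})$. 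Since all variables in $\supp(\overline{\red}_i(\pset{P})^{(k)})$ are $\leq x_k$, the chordality device now yields $(x_p,x_q) \in G(\pset{P})$, with the degenerate possibilities $x_p = x_k$ or $x_q = x_k$ absorbed because one of the two edges is then already the desired one. This closes the induction and proves the lemma.
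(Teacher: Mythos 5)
Your proposal is correct and takes essentially the same route as the paper's own proof: a downward induction on $i$ with the same three-way case split on $k$, the same resolution of the mixed subcase $k<i$ by routing through the common leading variable $x_k$ (using that $x_k$ lies in both $\supp(\overline{\red}_{i+1}(\pset{P})^{(k)})$ and $\supp(\pset{R}_i^{(k)}) \subset \supp(\overline{\red}_{i+1}(\pset{P})^{(i)})$), and the same appeal to the perfect elimination ordering to close the triangle. The only cosmetic differences are that you prove the base case directly for $\pset{P}$ itself (as $i=n+1$) instead of citing the proof of Theorem~\ref{thm:reduction} for $i=n$, and you explicitly absorb the degenerate possibilities $x_p = x_k$ or $x_q = x_k$, which the paper leaves implicit.
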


\begin{proof}
  We induce on the integer $i$. In the case $i=n$, from the proof of Theorem~\ref{thm:reduction} one can easily find that the conclusion holds . Now suppose that the conclusion holds for $i=j~(\leq n)$, and next we prove that it also holds for $i=j-1$, namely for any $x_p$ and $x_q$, if there exists $k~(\geq p, q)$ such that $x_p, x_q \in \supp(\overline{\red}_{j-1}(\pset{P})^{(k)})$, then $(x_p, x_q) \in G(\pset{P})$. 

  Since $\overline{\red}_{j-1}(\pset{P}) = \red_{j-1}(\overline{\red}_j(\pset{P}))$, by \eqref{eq:reduction} we consider the following three cases of $k$.

  (1) If $k > j-1$, then $\overline{\red}_{j-1}(\pset{P})^{(k)} = \overline{\red}_j(\pset{P})^{(k)}$, and thus $x_p, x_q \in \supp(\overline{\red}_j(\pset{P})^{(k)})$. By the inductive assumption we have $(x_p, x_q) \in G(\pset{P})$.
  
  (2) If $k = j-1$, then $x_p, x_q \in \supp(T_{j-1}) \subset \supp(\overline{\red}_j(\pset{P})^{(j-1)})$, and thus by the inductive assumption we have $(x_p, x_q) \in G(\pset{P})$.

  (3) If $k < j-1$, then there exists a polynomial set $\tilde{\pset{R}}$ such that $\overline{\red}_{j-1}(\pset{P})^{(k)} = \overline{\red}_j(\pset{P})^{(k)} \cup \tilde{\pset{R}}^{(k)}$ and $\supp(\tilde{\pset{R}}) \subset \supp(\overline{\red}_j(\pset{P})^{(j-1)}) \setminus \{x_{j-1}\}$. 

(3.1) If $\tilde{\pset{R}} = \emptyset$, then $x_p, x_q \in \supp(\overline{\red}_j(\pset{P})^{(k)})$, and by the inductive assumption we know that $(x_p, x_q) \in G(\pset{P})$. 

(3.2) If $\tilde{\pset{R}} \neq \emptyset$, then $x_k \in \supp(\tilde{\pset{R}}^{(k)}) \subset \supp(\overline{\red}_j(\pset{P})^{(j-1)})$. Next we consider the following three cases. (3.2.1) $x_p, x_q \in \supp(\overline{\red}_j(\pset{P})^{(k)})$: with the same argument as in (a) we know that $(x_p, x_q) \in G(\pset{P})$. (3.2.2) $x_p, x_q \in \supp(\tilde{\pset{R}}^{(k)}) \subset \supp(\overline{\red}_j(\pset{P})^{(j-1)})$: by the induction assumption we know that $(x_p, x_q) \in G(\pset{P})$. (3.2.3) $x_p \in \supp(\overline{\red}_j(\pset{P})^{(k)})$ and $x_q \in \supp(\tilde{\pset{R}}^{(k)})$: Since $x_p, x_k \in \supp(\overline{\red}_j(\pset{P})^{(k)})$, by the induction assumption we have $(x_p, x_k) \in G(\pset{P})$; since $x_q, x_k \in \supp(\overline{\red}_j(\pset{P})^{(j-1)})$, by the induction assumption we have $(x_q, x_k) \in G(\pset{P})$. Then by the chordality of $\pset{P}$, $(x_p, x_k) \in G(\pset{P})$ and $(x_q, x_k) \in G(\pset{P})$ imply that $(x_p, x_q) \in G(\pset{P})$.

This ends the proof. 
\end{proof}

\begin{theorem}\label{thm:subgraph}
  Let $\pset{P} \subset \kx$ be a chordal polynomial set with $x_1 < \cdots < x_n$ as one perfect elimination ordering of $G(\pset{P})$ and $\overline{\red}_i(\pset{P})$ be as defined in \eqref{eq:redBar} for $i=n, \ldots, 1$. Then for each $i=n, \ldots, 1$, $G(\overline{\red}_i(\pset{P})) \subset G(\pset{P})$.
\end{theorem}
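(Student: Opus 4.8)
The plan is to derive this theorem directly from Lemma~\ref{prop:subgraph}, which already contains all of the combinatorial work; the theorem is essentially a repackaging of that statement in the language of associated graphs. First I would fix an index $i~(1 \leq i \leq n)$ and take an arbitrary edge $(x_p, x_q) \in G(\overline{\red}_i(\pset{P}))$. The goal is to show $(x_p, x_q) \in G(\pset{P})$, since this inclusion of edge sets, together with the obvious inclusion of vertex sets, is exactly what $G(\overline{\red}_i(\pset{P})) \subset G(\pset{P})$ means.

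By the definition of the associated graph, the edge $(x_p, x_q)$ arises from some polynomial $P \in \overline{\red}_i(\pset{P})$ with $x_p, x_q \in \supp(P)$. Let $x_k = \lv(P)$, so that $P \in \overline{\red}_i(\pset{P})^{(k)}$, and note that $k \geq p$ and $k \geq q$ because $x_p$ and $x_q$ both appear in $P$ whose greatest variable is $x_k$. Consequently $x_p, x_q \in \supp(P) \subseteq \supp(\overline{\red}_i(\pset{P})^{(k)})$, so the hypothesis of Lemma~\ref{prop:subgraph} is met with this very $k$. Applying Lemma~\ref{prop:subgraph} then yields $(x_p, x_q) \in G(\pset{P})$ immediately. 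Since the edge was arbitrary, every edge of $G(\overline{\red}_i(\pset{P}))$ is an edge of $G(\pset{P})$, and the desired inclusion follows for each $i = n, \ldots, 1$.

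As for difficulty, there is essentially no obstacle remaining at this stage: the induction on $i$ and the full case analysis (in particular the use of chordality of $G(\pset{P})$ in the branch where $x_p$ and $x_q$ come from different parts of the support, handled in case (3.2.3) of Lemma~\ref{prop:subgraph}) have all been carried out already. The only point requiring care is the bookkeeping step of passing from an edge of the associated graph to a single homogeneous component $\overline{\red}_i(\pset{P})^{(k)}$ whose support contains both endpoints, and verifying $k \geq p, q$ so that the lemma genuinely applies. Once that translation is in place, the conclusion is mechanical, and the theorem is best presented simply as a corollary of the lemma rather than with any fresh argument.
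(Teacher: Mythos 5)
Your proposal is correct and follows essentially the same route as the paper: both proofs take an arbitrary edge of $G(\overline{\red}_i(\pset{P}))$, locate a polynomial $P \in \overline{\red}_i(\pset{P})^{(k)}$ with $x_p, x_q \in \supp(P)$ and $k \geq p, q$, and then invoke Lemma~\ref{prop:subgraph} to conclude $(x_p, x_q) \in G(\pset{P})$. Your explicit choice $x_k = \lv(P)$ and the remark on vertex-set inclusion match the paper's bookkeeping exactly, so nothing further is needed.
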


\begin{proof}
  By the construction of $\overline{\red}_i(\pset{P})$, we know that all the vertices of $G(\overline{\red}_i(\pset{P}))$ are also vertices of $G(\pset{P})$. For each edge $(x_p, x_q) \in G(\overline{\red}_i(\pset{P}))$, there exists an integer $k~(p, q \leq k \leq n)$ and a polynomial $P$ such that $x_p, x_q \in \supp(P)$ and $P \in \overline{\red}_i(\pset{P})^{(k)}$. Then by Lemma~\ref{prop:subgraph}, we know that $(x_p, x_q) \in G(\pset{P})$, and thus $G(\overline{\red}_i(\pset{P})) \subset G(\pset{P})$. 
\end{proof}

\begin{corollary}\label{cor:subgraph2}
  Let $\pset{P} \subset \kx$ be a chordal polynomial set with $x_1 < \cdots < x_n$ as one perfect elimination ordering of $G(\pset{P})$ and $\overline{\red}_i(\pset{P})$ be as defined in \eqref{eq:redBar} for $i=n, \ldots, 1$. If $\pset{T} := \overline{\red}_1(\pset{P})$ does not contain any nonzero constant, then $\pset{T}\setminus \{0\}$ forms a triangular set such that $G(\pset{T}) \subset G(\pset{P})$. 
\end{corollary}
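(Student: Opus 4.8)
The plan is to separate the statement into its two assertions — that $\pset{T} \setminus \{0\}$ is a triangular set, and that $G(\pset{T}) \subset G(\pset{P})$ — and to dispose of the second one at once. Indeed the graph inclusion is nothing more than Theorem~\ref{thm:subgraph} specialized to $i = 1$: that theorem already gives $G(\overline{\red}_i(\pset{P})) \subset G(\pset{P})$ for every $i$, so taking $i = 1$ yields $G(\pset{T}) = G(\overline{\red}_1(\pset{P})) \subset G(\pset{P})$ with no further work. All the substance therefore lies in verifying that $\pset{T} \setminus \{0\}$ is a triangular set in the sense of Definition~\ref{def:Tset-system}.

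For the triangular-set claim, the key structural fact I would establish first is that $\overline{\red}_1(\pset{P})^{(i)}$ contains at most one polynomial for each level $i \geq 1$. This is read off directly from the recursive definitions \eqref{eq:reduction} and \eqref{eq:redBar}: in the chain $\overline{\red}_1 = \red_1 \circ \cdots \circ \red_n$, the operator $\red_i$ is the one that collapses level $i$ to the singleton $\{T_i\}$, while every operator applied afterwards, namely $\red_{i-1}, \ldots, \red_1$, carries a reduction index strictly smaller than $i$ and hence leaves level $i$ untouched (the branch of \eqref{eq:reduction} governing levels above the current reduction index). Thus $\overline{\red}_1(\pset{P})^{(i)} = \{T_i\}$, or is empty when the level was already empty after the higher reductions, so each nonempty positive level carries a single polynomial.

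The remaining point, and the one demanding the most care, is to check the two defining conditions of a triangular set: non-constancy and strictly increasing leading variables. Here I would invoke that the mapping $f_i$ of \eqref{eq:mapping} outputs $T_i \in \fk[\p{x}_i] \setminus \fk[\p{x}_{i-1}]$, which forces $x_i \in \supp(T_i)$ and hence $\lv(T_i) = x_i$; in particular each $T_i$ is non-constant, and the surviving $T_i$ have pairwise distinct, naturally ordered leading variables. The only polynomials that could violate non-constancy are constants, and these can arise only at level $0$ (for instance as vanishing pseudo-remainders). The main obstacle I anticipate is the bookkeeping needed to confirm this last point cleanly: one must argue that no positive level can ever acquire a constant through the augmentation branch of \eqref{eq:reduction}, which holds because $\pset{R}_i \subset \fk[\p{x}_{i-1}]$ deposits any constant remainder into level $0$ rather than into a positive level. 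The hypothesis that $\pset{T}$ contains no nonzero constant then confines all such constants to the single element $0$, precisely what is removed in passing to $\pset{T} \setminus \{0\}$; listing the remaining $T_i$ in increasing order of leading variable exhibits the desired triangular set.
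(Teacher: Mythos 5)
Your proposal is correct and follows exactly the route the paper intends: the paper states this corollary without proof as an immediate consequence of Theorem~\ref{thm:subgraph} (giving $G(\pset{T}) \subset G(\pset{P})$ at $i=1$), with the triangular-set part read off from the construction in \eqref{eq:mapping}--\eqref{eq:redBar} just as you do. Your level-by-level bookkeeping --- each positive level of $\overline{\red}_1(\pset{P})$ is $\{T_i\}$ or empty, $T_i \in \fk[\p{x}_i] \setminus \fk[\p{x}_{i-1}]$ forces $\lv(T_i) = x_i$, and all constants (hence, by hypothesis, only $0$) live at level $0$ --- is precisely the routine verification the authors left implicit.
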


Corollary~\ref{cor:subgraph2} tells us that under the conditions that the input polynomial set is chordal and the variable ordering is one perfect elimination ordering, the associated graph of one specific triangular set computed in any algorithm for triangular decomposition in top-down style with reduction satisfying the conditions \eqref{eq:mapping} and \eqref{eq:reduction} is a subgraph of the associated graph of the input polynomial set. In fact, this triangular set is usually the ``main branch'' in the triangular decomposition in the sense that other branches are obtained by adding additional constrains in the splitting in the process of triangular decomposition. 

Note that in the case when the input polynomial set $\pset{P}$ is not chordal, a process of chordal completion can be carried out on $G(\pset{P})$ to generate a chordal graph (in the worst case this chordal completion results in a complete graph which is trivially chordal). After this chordal completion the conditions of Corollary~\ref{cor:subgraph2} will be satisfied.

The chordality of any triangular set other than the specific one above in a triangular decomposition computed by an algorithm in top-down style is dependent on the splitting strategy in the algorithm. In the following sections, we study several specific algorithms for triangular decomposition in top-down style and prove that the associated graphs of all the polynomial sets in the decomposition process of these algorithms are subgraphs of the associated graph of a chordal input polynomial set.

\section{Wang's method for triangular decomposition in top-down style}
\label{sec:wang}

A simply-structured algorithm was proposed by Wang for triangular decomposition in top-down style in 1993 \cite{w93e}, which is referred to as Wang's method in the literature (see. e.g., \cite{a99ta}). Next the chordality of polynomial sets in the decomposition process of Wang's method is studied. 

\subsection{Wang's method revisited}
\label{sec:reformulation}

For the self-containness of this paper, Wang's method for triangular decomposition is outlined in Algorithm~\ref{alg:wang} below. In this algorithm and those to follow, the data structure $(\pset{P}, \pset{Q}, k)$ is used to represent two polynomial sets $\pset{P}$ and $\pset{Q}$ such that $\#\pset{P}^{(i)} =0$ or $1$ for $i=k+1, \ldots, n$. For a set $\Phi$ consisting of tuples in the form $(\pset{P}, \pset{Q}, i)~(i=0, \ldots, n)$, denote $\Phi^{(k)} := \{(\pset{P}, \pset{Q}, i)\in \Phi|\, i = k\}$. The subroutine $\pop(\pset{S})$ returns an element from a set $\pset{S}$ and then removes it from $\pset{S}$. 

\begin{algorithm}[!ht]
\caption{Wang's method for triangular decomposition $\Psi\!:=\!\algwang(\pset{F})$}
\label{alg:wang}

\small

\KwIn{$\pset{F}$, a polynomial set in $\kx$}

\KwOut{$\Psi$, a set of finitely many triangular systems which form a triangular decomposition of $\pset{F}$}

\BlankLine
$\Phi := \{(\pset{F}, \emptyset, n)\}$\;
$\Psi := \emptyset$\;

\For{$k = n, \ldots, 1$}
{  \While{$\Phi^{(k)} \neq \emptyset$}
   {
      $(\pset{P}, \pset{Q}, k) := \pop(\Phi^{(k)})$\;

      \eIf{$\#\pset{P}^{(k)} > 1$}
      {$T:=$ a polynomial in $\pset{P}^{(k)}$ with minimal degree in $x_k$\; \label{line:minimal}
        $\pset{P}' := \pset{P} \setminus \pset{P}^{(k)} \cup \{T\} \cup \{\prem(P, T): P \in \pset{P}^{(k)} \setminus \{T\}\}$\;
        $\Phi := \Phi \cup \{(\pset{P} \setminus \{T\} \cup \{\ini(T), \tail(T)\}, \pset{Q}, k)\} \cup  \{(\pset{P}', \pset{Q} \cup \{\ini(T)\}, k)\}$ \label{line:wang-end}\;
      }
      {  $\Phi := \Phi \cup \{(\pset{P}, \pset{Q}, k-1)\}$}
   }
}

\For{$(\pset{P}, \pset{Q}, 0) \in \Phi^{(0)}$}
{
   \If{$\pset{P}^{(0)} \setminus \{0\} = \emptyset$}
   {
      $\Psi := \Psi \cup \{(\pset{P} \setminus \{0\}, \pset{Q})\}$\;
   }
}

\Return $\Psi$\;
\end{algorithm} 


The decomposition process in Wang's method (Algorithm~\ref{alg:wang}) applied to $\pset{F}$ can be viewed as a binary tree with its root as $(\pset{F}, \emptyset, n)$. The nodes of this binary tree are all the tuples $(\pset{P}, \pset{Q}, k)$ picked from $\Phi$, and each node $(\pset{P}, \pset{Q}, k)$ has two child nodes $(\pset{P}', \pset{Q}', k)$ and $(\pset{P}'', \pset{Q}'', k)$, where 
\begin{equation}
  \label{eq:wang-split}
  \begin{split}
    \pset{P}' &\!:=\! \pset{P} \!\setminus\! \pset{P}^{(k)} \!\cup\! \{T\} \!\cup\! \{\prem(P, T): P \in \pset{P}^{(k)}\setminus \{T\}\}, ~~~ \pset{Q}' := \pset{Q} \cup \{\ini(T)\}, \\
    \pset{P}'' &\!:=\! \pset{P} \!\setminus\! \{T\} \!\cup\! \{\ini(T), \tail(T)\}, \qquad\quad\quad\qquad\qquad\quad~~    \pset{Q}'' := \pset{Q},
  \end{split}
\end{equation}
with $T$ as a polynomial in $\pset{P}^{(k)}$ with minimal degree in $x_k$. In fact, the left child node $(\pset{P}', \pset{Q}', k)$ corresponds to the case when $\ini(T)\neq 0$ and thus reduction of $\pset{P}^{(k)}$ are performed with respect to $T$; while the right child node $(\pset{P}'', \pset{Q}'', k)$ corresponds to the case $\ini(T) = 0$, where $T$ is replaced by $\ini(T)$ and $\tail(T)$. 

The binary decomposition tree for Wang's method and the splitting at one node are illustrated in Figures~\ref{fig:decTree}.

    \begin{figure}[ht]
      \centering
\includegraphics[width=15cm,keepaspectratio]{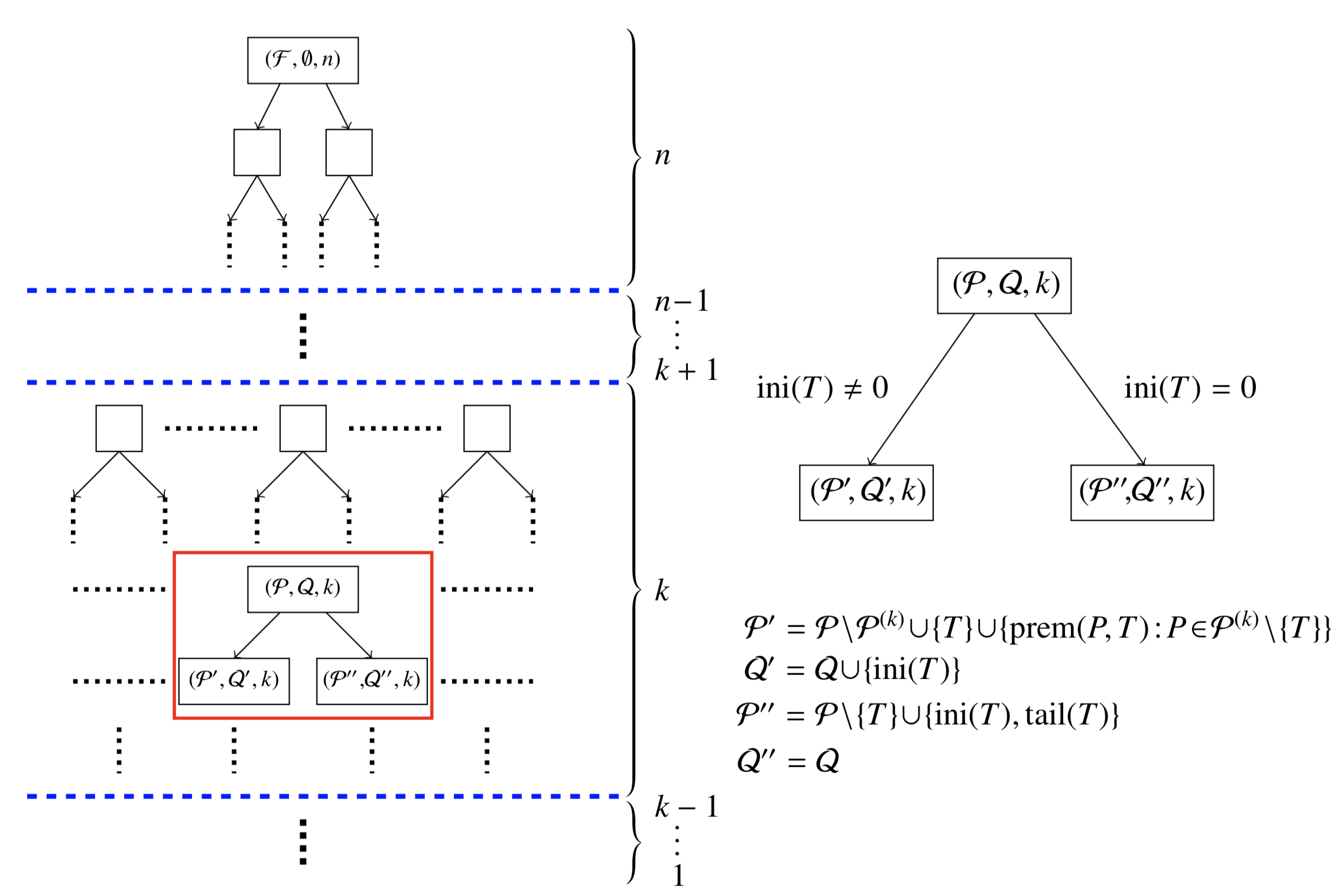}
\caption{Binary decomposition tree for Wang's method}
\label{fig:decTree}
\end{figure}

\subsection{Chordality of polynomial sets in Wang's method}
\label{sec:graphs}

With a chordal polynomial set as the input of Wang's method, the relationships between the associated graphs of the polynomial sets in the left nodes and that of the input polynomial set and between the associated graphs of the polynomial sets in the right child nodes and those in the parent nodes are clarified in the following propositions.

\begin{proposition}\label{prop:wang-left}
  Let $\pset{F} \subset \kx$ be a chordal polynomial set with $x_1 < \cdots < x_n$ as one perfect elimination ordering of $G(\pset{F})$, $(\pset{P}, \pset{Q}, k)$ be any node in the binary decomposition tree of $\algwang(\pset{F})$ such that $G(\pset{P}) \subset G(\pset{F})$, $T$ be a polynomial in $\pset{P}$ with minimal degree in $x_k$, and $\pset{P}'$ be as defined in \eqref{eq:wang-split}. Then $G(\pset{P}') \subset G(\pset{F})$.
\end{proposition}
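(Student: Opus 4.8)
The plan is to establish the edge inclusion $G(\pset{P}') \subset G(\pset{F})$ directly, by taking an arbitrary edge of $G(\pset{P}')$, choosing a polynomial of $\pset{P}'$ that witnesses it, and classifying that witness according to the three-part decomposition $\pset{P}' = (\pset{P} \setminus \pset{P}^{(k)}) \cup \{T\} \cup \{\prem(P, T) : P \in \pset{P}^{(k)} \setminus \{T\}\}$ in \eqref{eq:wang-split}. The two ``old'' families, namely $\pset{P} \setminus \pset{P}^{(k)}$ and the single polynomial $T$, are all already members of $\pset{P}$; hence any edge witnessed by one of them lies in $G(\pset{P})$, and the hypothesis $G(\pset{P}) \subset G(\pset{F})$ disposes of this case at once. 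All the real content concerns the ``new'' pseudo-remainders.

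First I would record the support estimate $\supp(\prem(P, T)) \subseteq \supp(P) \cup \supp(T)$, which I would justify from the defining identity $\ini(T)^s P = Q T + \prem(P, T)$ of the pseudo-division together with $\supp(\ini(T)) \subseteq \supp(T)$ and $\supp(Q) \subseteq \supp(P) \cup \supp(T)$. The structural fact that drives the whole argument is that both $P$ and $T$ lie in $\pset{P}^{(k)}$, so $\lv(P) = \lv(T) = x_k$; consequently $x_k \in \supp(P) \cap \supp(T)$ and every variable occurring in either polynomial is $\leq x_k$.

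Now consider an edge $(x_a, x_b) \in G(\pset{P}')$ witnessed by some $\prem(P, T)$, so that $x_a, x_b \in \supp(P) \cup \supp(T)$. If both endpoints lie in $\supp(P)$, or both in $\supp(T)$, the edge already belongs to $G(\pset{P}) \subset G(\pset{F})$ because $P, T \in \pset{P}$. The only remaining possibility is the ``cross'' case, say $x_a \in \supp(P)$ and $x_b \in \supp(T)$ with $x_a \neq x_k$ and $x_b \neq x_k$; this is exactly where chordality must be invoked. Since $x_a, x_k \in \supp(P)$ and $x_b, x_k \in \supp(T)$, we have $(x_a, x_k), (x_b, x_k) \in G(\pset{P}) \subset G(\pset{F})$, and because $x_a, x_b < x_k$ these edges place both $x_a$ and $x_b$ in the set $X_k$ of \eqref{eq:smaller}; the perfect elimination ordering then makes $X_k$ a clique of $G(\pset{F})$, which yields $(x_a, x_b) \in G(\pset{F})$. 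The degenerate subcase where one endpoint equals $x_k$ collapses to the single-polynomial argument and needs no chordality. This chordality step is the same one already used in the proof of Theorem~\ref{thm:reduction}.

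The step I expect to be the main obstacle is not any single deep idea but the careful bookkeeping around the pseudo-remainder: one must pin down the support estimate $\supp(\prem(P, T)) \subseteq \supp(P) \cup \supp(T)$ rigorously (this is the only place the precise form of the pseudo-remainder enters), and one must isolate the endpoint $x_k$ so that the perfect-elimination hypothesis $x_a, x_b < x_k$ is legitimately available before the clique property of $X_k$ is applied. Once these are in place, everything else reduces to the inclusion hypothesis $G(\pset{P}) \subset G(\pset{F})$.
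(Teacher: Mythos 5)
Your proof is correct and takes essentially the same route as the paper's: both rest on the support estimate $\supp(\prem(P,T)) \subseteq \supp(P) \cup \supp(T) \subseteq \supp(\pset{P}^{(k)})$ and close the ``cross'' edges by noting that both endpoints are adjacent to $x_k$ in $G(\pset{P}) \subset G(\pset{F})$ and then invoking the clique property of the perfect elimination ordering at $x_k$. The only differences are cosmetic: you track supports per witness polynomial and skip chordality when a single polynomial contains both endpoints, while the paper argues with the aggregate set $\pset{R}$, case-splits on the level $j$ of the witness, and applies chordality uniformly.
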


\begin{proof}
  Clearly $\supp(\pset{P}') \subset \supp(\pset{P}) \subset \supp(\pset{F})$, and it suffices to prove that for any edge $(x_p, x_q) \in G(\pset{P}')$, we have $(x_p, x_q) \in G(\pset{F})$. 

Denote $\pset{R} := \{\prem(P, T):\, P\in \pset{P}^{(k)}\setminus \{T\}\}$. Then by definition $\supp(\pset{R}^{(i)}) \subset \supp(\pset{P}^{(k)})$ for any $i=1, \ldots, k$. Furthermore, the following relationships hold: $\pset{P}'^{(i)} = \pset{P}^{(i)} \cup \pset{R}^{(i)}$ for $i=1, \ldots, k-1$ and $\pset{P}'^{(k)} = \{T\} \cup \pset{R}^{(k)}$. For any edge $(x_p, x_q) \in G(\pset{P}')$, there exist an integer $j~(p, q \leq j \leq k)$ and a polynomial $P \in \pset{P}'^{(j)}$ such that $x_p, x_q\in \supp(P)$. 

In the case when $j=k$, we have $(x_p, x_q)\in G(\pset{P}'^{(k)})$. Then $x_p, x_q \in \supp(T) \cup \supp(\pset{R}^{(k)}) \subset \supp(\pset{P}^{(k)})$ and thus $(x_p, x_k)$, $(x_q, x_k) \in G(\pset{P}^{(k)}) \subset G(\pset{F})$. By the chordality of $\pset{F}$, we have $(x_p, x_q) \in G(\pset{F})$.

In the case when $j<k$, we have $(x_p, x_q) \in G(\pset{P}'^{(j)})$ with $\pset{P}'^{(j)} = \pset{P}^{(j)} \cup \pset{R}^{(j)}$. If $P \in \pset{P}^{(j)}$, then it is obvious that $(x_p, x_q) \in G(\pset{P}^{(j)}) \subset G(\pset{F})$; otherwise if $P\in \pset{R}^{(j)}$, then $x_p, x_q \in \supp(R^{(j)}) \subset \supp(\pset{P}^{(k)})$ and thus $(x_p, x_k), (x_q, x_k) \in G(\pset{P}) \subset G(\pset{F})$. Then the chordality of $\pset{F}$ implies $(x_p, x_q) \in G(\pset{F})$.
\end{proof}

\begin{proposition}\label{prop:wang-right}
Let $(\pset{P}, \pset{Q}, k)$ be any node in the binary decomposition tree of $\algwang(\pset{F})$, $T$ be a polynomial in $\pset{P}^{(k)}$ with minimal degree in $x_k$, and $\pset{P}''$ be defined as in \eqref{eq:wang-split}. Then $G(\pset{P}'') \subset G(P)$. In particular, if $\supp(\tail(T)) = \supp(T)$, then $G(\pset{P}'') = G(\pset{P})$. 
\end{proposition}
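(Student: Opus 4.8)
The plan is to observe that, in contrast to the left-child case treated in Proposition~\ref{prop:wang-left}, the passage from $\pset{P}$ to $\pset{P}''$ involves no pseudo-division and no appeal to chordality: by \eqref{eq:wang-split} the only change is that the single polynomial $T$ is replaced by its initial $\ini(T)$ and its tail $\tail(T)$. Consequently the whole argument reduces to the single support containment
\[
\supp(\ini(T)) \cup \supp(\tail(T)) \subset \supp(T).
\]
To establish this, I would write $T = \ini(T)\, x_k^d + \tail(T)$ with $d = \ldeg(T)$, $\ini(T) \in \fk[\p{x}_{k-1}]$, and $\deg(\tail(T), x_k) < d$. Since every term of $\ini(T)\, x_k^d$ has $x_k$-degree exactly $d$ while every term of $\tail(T)$ has $x_k$-degree strictly below $d$, no cancellation can occur between the two summands; hence each variable occurring in $\ini(T)$ or in $\tail(T)$ also occurs in $T$. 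This non-cancellation step is the one point that genuinely needs to be checked, and it is precisely what makes this proposition much easier than its left-child counterpart.

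With the containment in hand, I would first verify the vertex inclusion $\supp(\pset{P}'') \subset \supp(\pset{P})$: the members of $\pset{P}''$ are the polynomials of $\pset{P} \setminus \{T\}$ together with $\ini(T)$ and $\tail(T)$, and all of their supports lie in $\supp(\pset{P})$ by the containment above. For the edge inclusion $G(\pset{P}'') \subset G(\pset{P})$, I would take an arbitrary edge $(x_p, x_q) \in G(\pset{P}'')$, pick a witnessing polynomial $P \in \pset{P}''$ with $x_p, x_q \in \supp(P)$, and split into cases. If $P \in \pset{P} \setminus \{T\}$ the edge already lies in $G(\pset{P})$; if $P = \ini(T)$ or $P = \tail(T)$, the containment forces $x_p, x_q \in \supp(T)$, and since $T \in \pset{P}$ we again obtain $(x_p, x_q) \in G(\pset{P})$. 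This yields $G(\pset{P}'') \subset G(\pset{P})$, and note that here, unlike in Proposition~\ref{prop:wang-left}, neither chordality of $\pset{F}$ nor any level-based case analysis is invoked.

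For the equality statement, under the extra hypothesis $\supp(\tail(T)) = \supp(T)$ I would prove the reverse inclusion $G(\pset{P}) \subset G(\pset{P}'')$. An arbitrary edge $(x_p, x_q) \in G(\pset{P})$ has a witness $P \in \pset{P}$; if $P \neq T$ then $P \in \pset{P} \setminus \{T\} \subset \pset{P}''$ witnesses the edge in $G(\pset{P}'')$ as well, while if $P = T$ then $x_p, x_q \in \supp(T) = \supp(\tail(T))$ and the polynomial $\tail(T) \in \pset{P}''$ serves as the witness. Combining this with the first part gives $G(\pset{P}'') = G(\pset{P})$. I do not anticipate any serious obstacle here; the only thing to be careful about is that the hypothesis $\supp(\tail(T)) = \supp(T)$ implicitly requires $x_k \in \supp(\tail(T))$ (as $x_k \in \supp(T)$ always holds), which is exactly what keeps $x_k$ a vertex of $G(\pset{P}'')$ once $T$ is discarded.
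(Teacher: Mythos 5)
Your proposal is correct and follows essentially the same route as the paper's proof: establish $\supp(\ini(T)) \cup \supp(\tail(T)) \subset \supp(T)$, deduce the vertex and edge inclusions by replacing the witness $\ini(T)$ or $\tail(T)$ with $T$, and for the equality use $\tail(T) \in \pset{P}''$ as the witness when $\supp(\tail(T)) = \supp(T)$. The only difference is that you spell out the non-cancellation argument (terms of $\ini(T)\,x_k^d$ and of $\tail(T)$ have distinct $x_k$-degrees) that the paper leaves implicit, which is a harmless refinement rather than a different approach.
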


\begin{proof}
  Since $\pset{P}''$ is constructed by replacing $T$ in $\pset{P}$ with $\ini(T)$ and $\tail(T)$, we only need to study the differences between $G(\pset{P})$ and $G(\pset{P}'')$ caused by this replacement. First, by $\supp(\ini(T)) \cup \supp(\tail(T)) \subset \supp(T)$ we have $\supp(\pset{P}'') \subset \supp(\pset{P})$. Second, for any edge $(x_p, x_q)$ in $G(\ini(T))$ or in $G(\tail(T))$, we know that $(x_p, x_q) \in G(T)$, which means that all the edges of $G(\pset{P}'')$ are also edges of $G(\pset{P})$. Therefore, $G(\pset{P}'') \subset G(\pset{P})$.

In particular, if $\supp(\tail(T)) = \supp(T)$, then $\supp(\ini(T)) \cup \supp(\tail(T)) = \supp(T)$ and any edge $(x_p, x_q) \in \supp(T)$ is also contained in $G(\tail(T))$, and thus $G(\pset{P}'') = G(\pset{P})$. 
\end{proof}

\begin{example}\label{ex:wang-right}\rm
  Let 
  \begin{equation*}
    \begin{split}
\pset{P}_1 &= [x_1+x_2, x_1+x_3, x_2+x_3, x_4^3+x_1, x_3x_4^2+x_3+x_4], \\
\pset{P}_2 &= [x_1+x_2, x_1+x_3, x_2+x_3, x_4^3+x_1, x_3x_4^2+x_4].
    \end{split}
  \end{equation*}
Then $G(\pset{P}_1) = G(\pset{P}_2)$ is shown in Figure~\ref{fig:tail} below (left). Let $\pset{P}_1''$ and $\pset{P}_2''$ be constructed from $\pset{P}_1$ and $\pset{P}_2$ with respect to $x_4$ respectively. Then $x_3x_4^2+x_3+x_4$ and $x_3x_4^2+x_4$ are chosen as $T$ respectively and
  \begin{equation*}
\pset{P}_1'' = [x_1+x_2, x_1+x_3, x_2+x_3, x_3, x_4^3+x_1, x_3+x_4], 
\pset{P}_2'' = [x_1+x_2, x_1+x_3, x_2+x_3, x_3, x_4^3+x_1, x_4].
  \end{equation*}
One may check that $G(\pset{P}_1'') = G(\pset{P}_1)$ while $G(\pset{P}_2'') \neq G(\pset{P}_2)$, with $G(\pset{P}_2'')$ shown in Figure~\ref{fig:tail} below (right).

    \begin{figure}[ht]
      \centering
\includegraphics[width=3cm,keepaspectratio]{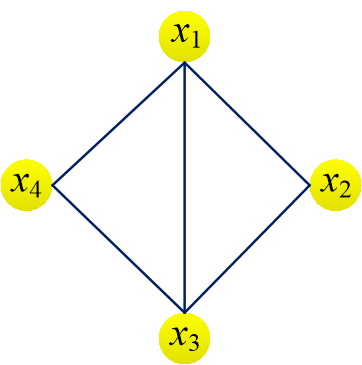}~\qquad \qquad
\includegraphics[width=3cm,keepaspectratio]{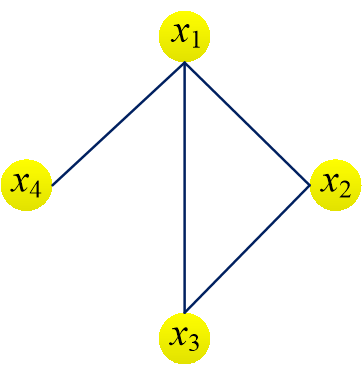}      
      \caption{The associated graphs $G(\pset{P}_1) = G(\pset{P}_2) = G(\pset{P}_1'')$ (left) and $G(\pset{P}_2'')$ (right) in Example~\ref{ex:wang-right}}
      \label{fig:tail}
\end{figure}
\end{example}

Next we prove that with a chordal input polynomial set, the polynomial sets in all the nodes of the decomposition tree of Wang's method, and thus all the computed triangular sets, have associated graphs which are subgraphs of that of the input polynomial set. 

  \begin{theorem}\label{thm:wang}
Let $\pset{F} \subset \kx$ be a chordal polynomial set with $x_1 < \cdots < x_n$ as one perfect elimination ordering of $G(\pset{F})$. Then for any node $(\pset{P}, \pset{Q}, k)$ in the binary decomposition tree of $\algwang(\pset{F})$, we have $G(\pset{P}) \subset G(\pset{F})$ and $G(\pset{Q}) \subset G(\pset{F})$.
  \end{theorem}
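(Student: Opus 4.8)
The plan is to argue by induction on the depth of the node $(\pset{P}, \pset{Q}, k)$ in the binary decomposition tree of $\algwang(\pset{F})$, establishing the two inclusions $G(\pset{P}) \subset G(\pset{F})$ and $G(\pset{Q}) \subset G(\pset{F})$ simultaneously. For the base case I would take the root $(\pset{F}, \emptyset, n)$, where $G(\pset{F}) \subset G(\pset{F})$ is immediate and $G(\emptyset)$ is the empty graph. For the inductive step I would fix a node $(\pset{P}, \pset{Q}, k)$ satisfying both inclusions and verify that each of its children inherits them. The key point is that the heavy lifting for the equation sets has already been prepared: Proposition~\ref{prop:wang-left} controls the left child and Proposition~\ref{prop:wang-right} controls the right child, so the tree induction merely has to invoke them at the correct nodes.

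Concretely, there are three types of child to treat. When $\#\pset{P}^{(k)} \leq 1$ the algorithm passes $(\pset{P}, \pset{Q}, k-1)$ to the next level without changing either set, so both inclusions transfer verbatim. For the left child $(\pset{P}', \pset{Q}', k)$ produced by a split, the inductive hypothesis $G(\pset{P}) \subset G(\pset{F})$ together with Proposition~\ref{prop:wang-left} gives $G(\pset{P}') \subset G(\pset{F})$ directly. For the right child $(\pset{P}'', \pset{Q}'', k)$, Proposition~\ref{prop:wang-right} yields $G(\pset{P}'') \subset G(\pset{P})$, which combined with $G(\pset{P}) \subset G(\pset{F})$ gives $G(\pset{P}'') \subset G(\pset{F})$.

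What the cited propositions do not address, and what constitutes the genuinely new ingredient here, is the tracking of the inequation sets $\pset{Q}$. Along a level transition and at the right child one has $\pset{Q}'' = \pset{Q}$, so the inclusion is preserved. At the left child $\pset{Q}' = \pset{Q} \cup \{\ini(T)\}$, and since $\supp(\ini(T)) \subset \supp(T)$, every edge of $G(\ini(T))$ is already an edge of $G(T) \subset G(\pset{P}) \subset G(\pset{F})$ (recall $T \in \pset{P}^{(k)} \subset \pset{P}$); with the hypothesis $G(\pset{Q}) \subset G(\pset{F})$ this delivers $G(\pset{Q}') \subset G(\pset{F})$. I do not anticipate a serious obstacle, as the structural propositions from Section~\ref{sec:graphs} already carry the chordality arguments; the only care required is bookkeeping, namely confirming that the level-decrementing step leaves $\pset{P}$ and $\pset{Q}$ intact and that the inequation sets grow only by an initial $\ini(T)$ whose support sits inside that of an existing polynomial, so that they never introduce an edge outside $G(\pset{F})$.
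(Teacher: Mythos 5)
Your proposal is correct and follows essentially the same route as the paper's own proof: induction on the depth of the node in the binary decomposition tree, invoking Proposition~\ref{prop:wang-left} for left children and Proposition~\ref{prop:wang-right} for right children, and handling $\pset{Q}$ by noting that it is unchanged at right children and grows only by $\ini(T)$ with $\supp(\ini(T)) \subset \supp(T)$ for $T \in \pset{P}$ at left children. Your explicit treatment of the level-decrementing step and the edge-level justification for $G(\pset{Q} \cup \{\ini(T)\}) \subset G(\pset{F})$ are just slightly more detailed versions of what the paper states tersely.
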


  \begin{proof}
    We induce on the depth $d$ of $(\pset{P}, \pset{Q}, k)$ in the binary decomposition tree. When $d=0$, the conclusion naturally holds. Now assume that for any node $(\tilde{\pset{P}}, \tilde{\pset{Q}}, k)$ of depth $d$ in the decomposition tree, we have $G(\tilde{\pset{P}}) \subset G(\pset{F})$ and $G(\tilde{\pset{Q}}) \subset G(\pset{F})$. Let $(\pset{P}, \pset{Q}, k)$ be of depth $d+1$ and $(\tilde{\pset{P}}, \tilde{\pset{Q}}, k)$ be its parent node of depth $d$ in the decomposition tree. If $(\pset{P}, \pset{Q}, k)$ is a left child node, then $G(\pset{P}) \subset G(\pset{F})$ by Proposition~\ref{prop:wang-left} and $G(\pset{Q}) \subset G(\pset{F})$, for $\pset{Q} = \tilde{\pset{Q}} \cup \{\ini(T)\}$ with $G(\tilde{Q}) \subset G(\pset{F})$ and some polynomial $T \in \tilde{\pset{P}}$, where $G(\tilde{\pset{P}}) \subset G(\pset{F})$; if it is a right node, $G(\pset{P}) \subset G(\tilde{\pset{P}}) \subset G(\pset{F})$ by Proposition~\ref{prop:wang-right} and $G(\pset{Q}) = G(\tilde{\pset{Q}}) \subset G(\pset{F})$. This ends the inductive proof.

  \end{proof}

  \begin{corollary}\label{cor:wang-ts}
    Let $\pset{F} \subset \kx$ be a chordal polynomial set with $x_1 < \cdots < x_n$ as one perfect elimination ordering of $G(\pset{F})$ and $\pset{T}_1, \ldots, \pset{T}_r$ be the triangular sets computed by $\algwang(\pset{F})$. Then $G(\pset{T}_i) \subset G(\pset{F})$ for $i=1, \ldots, r$. 
  \end{corollary}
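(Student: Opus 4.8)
The plan is to read this off directly from Theorem~\ref{thm:wang}, since the computed triangular sets are nothing but the polynomial components of the terminal nodes of the binary decomposition tree. First I would trace through the output phase of Algorithm~\ref{alg:wang}: the returned set $\Psi$ consists of pairs $(\pset{P} \setminus \{0\}, \pset{Q})$ coming from those tuples $(\pset{P}, \pset{Q}, 0) \in \Phi^{(0)}$ with $\pset{P}^{(0)} \setminus \{0\} = \emptyset$. Hence each computed triangular set $\pset{T}_i$ has the form $\pset{P}_i \setminus \{0\}$ for some tuple $(\pset{P}_i, \pset{Q}_i, 0)$, which is a leaf node (at level $0$) of the binary decomposition tree of $\algwang(\pset{F})$.

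Next I would invoke Theorem~\ref{thm:wang} at this leaf node. Since $(\pset{P}_i, \pset{Q}_i, 0)$ is a node of the binary decomposition tree and $\pset{F}$ is chordal with $x_1 < \cdots < x_n$ a perfect elimination ordering of $G(\pset{F})$, the theorem gives $G(\pset{P}_i) \subset G(\pset{F})$. It then remains to observe that passing from $\pset{P}_i$ to $\pset{T}_i = \pset{P}_i \setminus \{0\}$ cannot create new edges: the zero polynomial has empty support, so deleting it leaves the support of every remaining polynomial unchanged, whence $G(\pset{T}_i) \subset G(\pset{P}_i)$. Combining the two inclusions yields $G(\pset{T}_i) \subset G(\pset{P}_i) \subset G(\pset{F})$ for each $i = 1, \ldots, r$, which is the claim.

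I do not expect any genuine obstacle here, as the statement is an immediate specialization of Theorem~\ref{thm:wang} to the terminal nodes of the tree. The only point requiring a moment of care is the bookkeeping at the start: confirming, directly from the output lines of Algorithm~\ref{alg:wang}, that each $\pset{T}_i$ really is the $\pset{P}$-component of a genuine level-$0$ node (and not some set obtained after further processing), and that discarding the trivial zero polynomial does not perturb the associated graph. Once this identification is made, the inclusion is purely a consequence of the previously established theorem.
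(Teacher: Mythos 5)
Your proposal is correct and follows exactly the paper's route: the paper also derives the corollary directly from Theorem~\ref{thm:wang}, noting that each computed triangular set comes from a node of the decomposition tree. Your extra bookkeeping (identifying $\pset{T}_i = \pset{P}_i \setminus \{0\}$ at a level-$0$ node and checking that deleting the zero polynomial leaves the associated graph unchanged) just makes explicit what the paper leaves as ``straightforward.''
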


  \begin{proof}
    Straightforward from Theorem~\ref{thm:wang} with the fact that each triangular set $\pset{T}_i$ is from some node in the decomposition tree. 
  \end{proof}

\subsection{An illustrative example}
\label{sec:wangExample}

Here we illustrate the changes of the associated graphs of polynomial sets computed in the triangular decomposition via Wang's method applied to
\begin{equation}
  \label{eq:illus}
\pset{F} = \{x_2+x_1+2, (x_2+2)x_3+x_1, (x_3+x_2)x_4 + x_3 -1, x_4+x_2\} \subset \qnum[x_1, x_2, x_3, x_4]
\end{equation}
for the variable ordering $x_1 < x_2 < x_3 < x_4$. The associated graph $G(\pset{F})$ is shown in Figure~\ref{fig:exa1}, and one can check that $G(\pset{F})$ is chordal with $x_1 < x_2 < x_3 < x_4$ as one perfect elimination ordering. 

\begin{figure}[ht]
      \centering
\includegraphics[width=3cm,keepaspectratio]{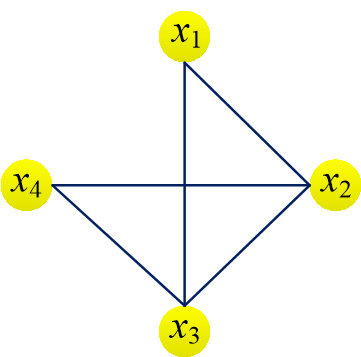}    
      \caption{The associated graph $G(\pset{F})$ with $\pset{F}$ from \eqref{eq:illus}}
      \label{fig:exa1}
    \end{figure}

First $T = (x_3+x_2)x_4+x_3-1$ is chosen as the polynomial in $\pset{F}^{(4)}$ with minimal degree in $x_4$, then the right node $(\pset{F}', \emptyset, 4)$ for the case $\ini(T) = x_3+x_2 = 0$ is added to $\Phi$ for further computation, where $\pset{F}' = \{x_2+x_1+2, (x_2+2)x_3+x_1, (x_3+x_2), x_3-1, x_4+x_2\}$. The psuedo division of $x_4+x_2 \in \pset{F}^{(4)}$ with respect to $T$ results in 
$$\pset{P} = \{x_2+x_1+2, (x_2+2)x_3+x_1, (x_2-1)x_3 + x_2^2 +1, (x_3+x_2)x_4 + x_3 -1\},$$
and thus the left child node is $(\pset{P}, \{x_3+x_2\}, 3)$.

Next $T' = (x_2+2)x_3+x_1$ is chosen as the polynomial in $\pset{P}^{(3)}$ with minimal degree in $x_3$, then the right node $(\pset{F}'', \{x_3+x_2\}, 3)$ is added to $\Phi$, where $\pset{F}'' = \{x_1, x_2+x_1+2, x_2+2, (x_2-1)x_3 + x_2^2 +1, (x_3+x_2)x_4 + x_3 -1\}$, and the pseudo division of $(x_2-1)x_3 + x_2^2 +1 \in \pset{P}^{(3)}$ with respect to $T'$ results in 
\begin{equation*}
\pset{P}' = \{x_2+x_1+2, x_2^3 + 2x_2^2 - (x_1-1)x_2 + x_1+2, (x_2+2)x_3+x_1, (x_3+x_2)x_4 + x_3 -1\}      
\end{equation*}
and thut the left node is $(\pset{P}', \{x_3+x_2, x_2+2\}, 2)$.

At this step $T'' = x_2+x_1+2$ is chosen as the polynomial in $\pset{P}'^{(2)}$ with minimal degree in $x_2$, then with $\ini(T'') = 1$ the right node implies conflicts and nothing is added to $\Phi$, and the pseudo-division of $x_2^3 + 2x_2^2 - (x_1-1)x_2 + x_1+2 \in \pset{P}'^{(2)}$ with respect to $T''$ results in the first triangular set 
\begin{equation}
  \label{eq:illus-T1}
\pset{T}_1 \!=\! [-x_1^3 + x_1^2 + 14x_1 + 16, x_2+x_1+2, (x_2+2)x_3+x_1, (x_3+x_2)x_4 + x_3 -1].
\end{equation}
With similar treatments on  $(\pset{F}', \emptyset, 4)$ and $(\pset{F}'', \{x_3+x_2\}, 3)$ in $\Phi$, the other two triangular sets 
\begin{equation}
  \label{eq:illus-T23}
    \pset{T}_2 = [x_1+1, x_2+1, x_3-1, x_4+x_2], ~~
    \pset{T}_3 = [x_1, x_2+2, (x_2-1)x_3+x_2^2+1, (x_3+x_2)x_4+x_3-1]
\end{equation}
are computed. 

The associated graphs of all these three computed triangular sets are shown in Figure~\ref{fig:exa2}. One can find that the associated graphs $G(\pset{F})$ and $G(\pset{T}_1)$ are the same, while $G(\pset{T}_2)$ and $G(\pset{T}_3)$ are strict subgraphs of $G(\pset{F})$.

\begin{figure}[ht]
      \centering
\includegraphics[width=3cm,keepaspectratio]{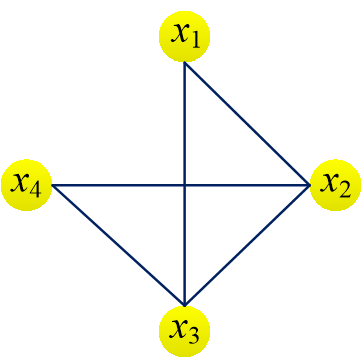}~\qquad\quad
\includegraphics[width=3cm,keepaspectratio]{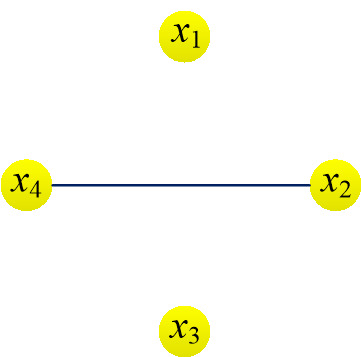}~\qquad\quad
\includegraphics[width=3cm,keepaspectratio]{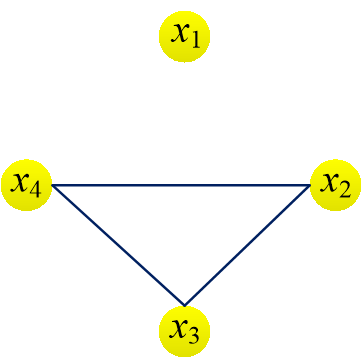}
      \caption{The associated graphs $G(\pset{T}_1)$, $G(\pset{T}_2)$, and $G(\pset{T}_3)$ with $\pset{T}_1$ from \eqref{eq:illus-T1} and $\pset{T}_2, \pset{T}_3$ from \eqref{eq:illus-T23}}
      \label{fig:exa2}
    \end{figure}

\section{Subresultant-based algorithm for triangular decomposition in top-down style}
\label{sec:subres}

\subsection{Subresultant-based algorithm for triangular decomposition revisited}
\label{sec:StateSubres}

First we reformulate and modify the algorithm in top-down style for triangular decomposition based on computation of subresultant regular subchains due to Wang as Algorithm~\ref{alg:srs}. One is referred to \cite[Section~2.4]{W2001E} for the details of this algorithm and Algorithm~TriSerS therein for the original description. In Algorithm~\ref{alg:srs} below, the subroutine $\srs(T_1, T_2)$ returns the subresultant regular subchain of $T_1$ and $T_2$ with respect to $\lv(T_2)$ when $\deg(T_1, \lv(T_2)) \geq \ldeg(T_2)$.


\begin{algorithm}[!ht]
  \caption{Subresultant-based algorithm for triangular decomposition $~~~~~~~~~~~~~~~~~~~~~~~~\Psi:=\algSub(\pset{F})$}
\label{alg:srs}

\small

\KwIn{$\pset{F}$, a polynomial set in $\kx$}

\KwOut{$\Psi$, a set of finitely many triangular systems which form a triangular decomposition of $\pset{F}$}

\BlankLine
$\Phi := \{(\pset{F}, \emptyset, n)\}$\;
$\Psi := \emptyset$\;

\For{$k = n, \ldots, 1$}
{  \While{$\Phi^{(k)} \neq \emptyset$}
   {
      $(\pset{P}, \pset{Q}, k) := \pop(\Phi^{(k)})$\;

      \eIf{$\#\pset{P}^{(k)} > 1$}
      {
         $T_2:=$ a polynomial in $\pset{P}^{(k)}$ with minimal degree in $x_k$ \label{line:srs-minimal}\; 
         $\Phi := \Phi \cup \{(\pset{P} \setminus \{T_2\} \cup \{\ini(T_2), \tail(T_2)\}, \pset{Q}, k)\}$ \label{line:Pprime}\;
         $T_1 := \pop(\pset{P}^{(k)} \setminus \{T_2\})$ \label{line:srs-another}\;
         $(H_2, \ldots, H_r) := \srs(T_1, T_2)$\;
         \For{$i=2, \ldots, r-1$}
         {
            $\Phi := \Phi \cup \{(\pset{P}\setminus \{T_1, T_2\} \cup \{H_i, \ini(H_{i+1}), \ldots, \ini(H_{r})\}, \pset{Q}\cup \{\ini(T_2)\}\cup \{\ini(H_i)\}, k)\}$\label{line:sub-split}\;
         }
         $\Phi := \Phi \cup \{(\pset{P} \setminus \{T_1, T_2\} \cup \{H_r\}, \pset{Q}\cup \{\ini(T_2)\}\cup\{\ini(H_r)\}, k) \}$ \label{line:end}\; 
       }
       {  $\Phi := \Phi \cup \{(\pset{P}, \pset{Q}, k-1)\}$ \label{line:replace}}
   }
}

\For{$(\pset{P}, \pset{Q}, 0) \in \Phi^{(0)}$}
{
   \If{$\pset{P}^{(0)} \setminus \{0\} = \emptyset$}
   {
      $\Psi := \Psi \cup \{(\pset{P} \setminus \{0\}, \pset{Q})\}$\;
   }
}

\Return $\Psi$\;
\end{algorithm}

We make one modification in Algorithm~\ref{alg:srs} against the original algorithm: in line~\ref{line:end} the second polynomial set in the node to be added to $\Phi$ is $\pset{Q}_r := \pset{Q}\cup \{\ini(T_2)\}\cup\{\ini(H_r)\}$, but in the original algorithm it is $\{\prem(Q, T_2): Q \in \pset{Q}_r\}$. This modification is made because pseudo division here is indeed against the essence of triangular decomposition in top-down style: $Q \in \pset{Q}$ may involve a variable $x_l$ such that $x_l > x_k$, and thus in this case replacing $Q$ with $\prem(Q, T_2)$ changes polynomials in $\pset{Q}^{(l)}$ with $l > k$, which conflicts with condition~(b) for an algorithm for triangular decomposition to be in top-down style. In fact, this operation may indeed result in a polynomial set whose associated graph is not a subgraph of the input chordal graph. In the original algorithm, the computation $\prem(Q, T_2)$ for $Q \in \pset{Q}_r$ is for ensuring the final triangular sets to be \emph{fine} (see \cite[Page~23]{W2001E} for the definition), and removing this in Algorithm~\ref{alg:srs} does not affact the correctness of the algorithm. 

Comparing the descriptions of Algorithms~\ref{alg:srs} and \ref{alg:wang}, one can find that these two algorithms are structurally similar and that Algorithm~\ref{alg:srs} only replace lines \ref{line:minimal}--\ref{line:wang-end} in Algorithm~\ref{alg:wang} by the new lines \ref{line:srs-minimal}--\ref{line:end}. Furthermore, the decomposition tree of Algorithm~\ref{alg:srs} can be constructed in the following way. This decomposition tree is rooted at $(\pset{F}, \emptyset, n)$ where $\pset{F}$ is the input polynomial set, and any node $(\pset{P}, \pset{Q}, k)$ in this decomposition tree has the following child nodes: $(\pset{P}', \pset{Q}', k)$ corresponding to line~\ref{line:Pprime}, $(\pset{P}_i, \pset{Q}_i, k)$ for $i=2, \ldots, r-1$ corresponding to line~\ref{line:sub-split}, and $(\pset{P}_r, \pset{Q}_r, k)$ corresponding to line~\ref{line:end}, where $\pset{P}', \pset{Q}', \pset{P}_i$ and $\pset{Q}_i$ are defined as:
\begin{equation}
  \label{eq:srs-def}
  \begin{split}
    \pset{P}' &:= \pset{P} \setminus \{T_2\} \cup \{\ini(T_2), \tail(T_2)\},\\
    \pset{Q}' &:= \pset{Q}, \\
    \pset{P}_{i} &: = \left\{
      \begin{tabular}[l]{ll}
        $\pset{P}\setminus \{T_1, T_2\} \cup \{H_i, \ini(H_{i+1}), \ldots, \ini(H_{r})\}$, & $i=2, \ldots, r-1$, \\
        $\pset{P} \setminus \{T_1, T_2\} \cup \{H_r\}$, & $i=r$,
      \end{tabular}
    \right.\\
    \pset{Q}_i  &: = \pset{Q}\cup \{\ini(T_2)\}\cup \{\ini(H_i)\}, \qquad\qquad\qquad\qquad\quad~~\! i=2, \ldots, r.
        \end{split}
\end{equation}
Since the number of polynomials in $\{H_2, \ldots, H_r\}$ is dynamic, we will have a dynamic multi-branch decomposition tree for $\algSub(\pset{F})$ as illustrated in Figure~\ref{fig:decTreeSRS}. 

    \begin{figure}[ht]
      \centering
\includegraphics[width=10cm,keepaspectratio]{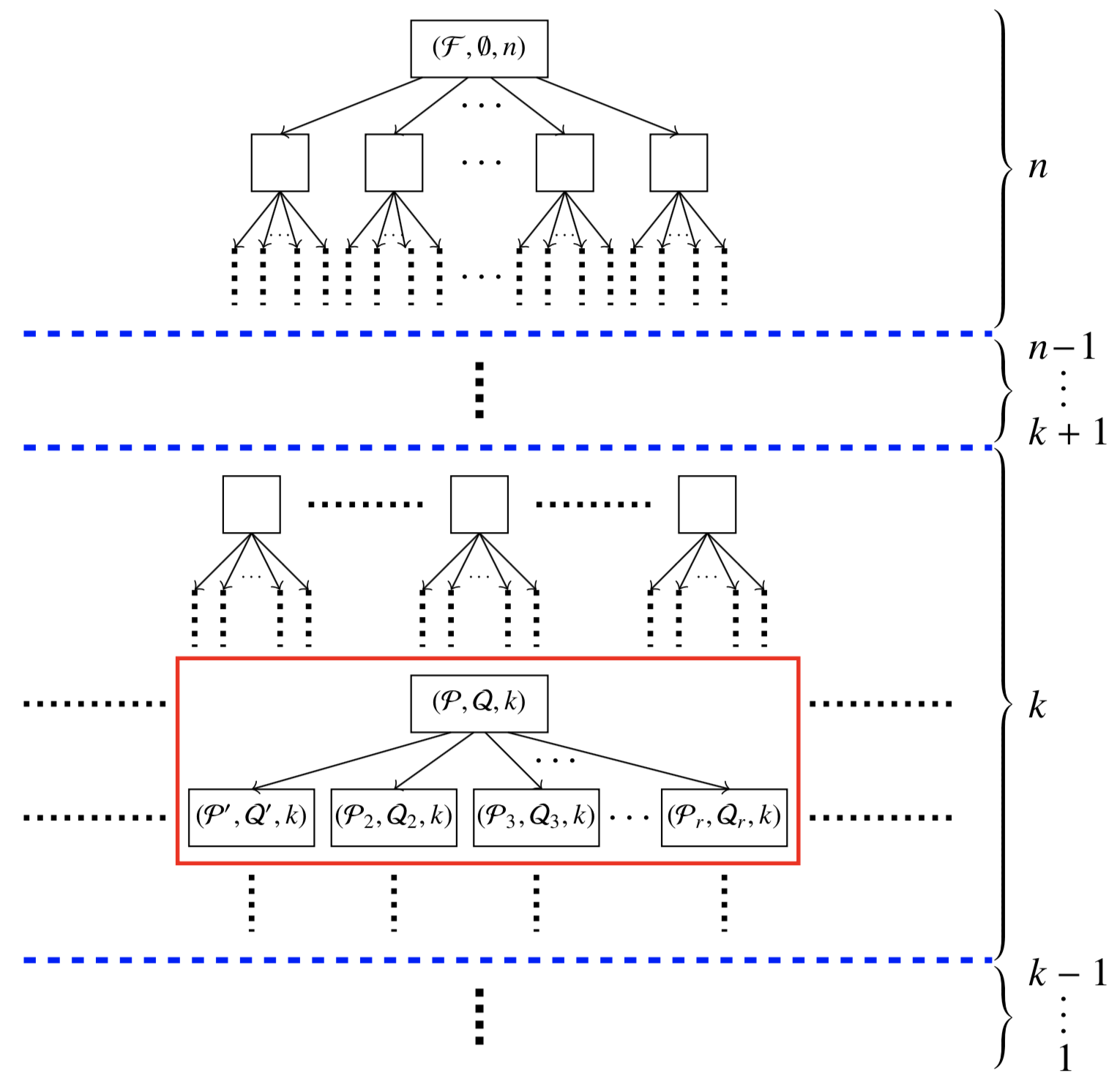}
\caption{Dynamic multi-branch decomposition tree of $\algSub(\pset{F})$}
\label{fig:decTreeSRS}
\end{figure}

\subsection{Chordality of polynomial sets in subresultant-based algorithm for triangular decomposition}
\label{sec:chordalSubres}


Next we show that chordality of polynomial sets is ``preserved'' with computation of the subresultant regular subchain, and thus the associated graphs of the two polynomial sets in each node of the decomposition tree of $\algSub(\pset{F})$ are subgraphs of the input chordal graph, which is the same conclusion as Theorem~\ref{thm:wang}.

\begin{proposition}
  \label{prop:td-srs}
  Let $\pset{F} \subset \kx$ be a chordal polynomial set with $x_1 < \cdots < x_n$ as one perfect elimination ordering of $G(\pset{F})$, and $(\pset{P}, \pset{Q}, k)$ be an arbitrary node in the decomposition tree of $\algSub(\pset{F})$ such that $G(\pset{P}) \subset G(\pset{F})$ and $\#\pset{P}^{(k)} > 1$. Let $T_2$ and $T_1$ be chosen from $\pset{P}$ as in lines~\ref{line:srs-minimal} and \ref{line:srs-another} in $\algSub(\pset{F})$, $H_2, \ldots, H_r$ be the subresultant regular subchain of $T_1$ and $T_2$ with respect to $x_k$, and $\pset{P}_i$ be as defined in \eqref{eq:srs-def} for $i=2, \ldots, r$. Then we have $G(\pset{P}_{i}) \subset G(\pset{F})$ for $i=2, \ldots, r$.
\end{proposition}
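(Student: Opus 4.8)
The plan is to follow the same strategy as in the proof of Proposition~\ref{prop:wang-left}, reducing everything to an edge-by-edge check combined with the chordality of $G(\pset{F})$. The key structural observation I would exploit first is that, by the definition of the subresultant regular subchain, each $H_i~(i=2, \ldots, r)$ and each $\ini(H_i)$ is a polynomial whose variables come only from $T_1$ and $T_2$ with respect to the eliminated variable $x_k$; more precisely, $\supp(H_i) \subset \supp(T_1) \cup \supp(T_2)$ and in particular $\supp(H_i) \setminus \{x_k\} \subset \supp(\pset{P}^{(k)})$, and similarly for $\ini(H_i)$ (which moreover does not contain $x_k$). This is the subresultant analogue of the fact used in the Wang case that pseudo-remainders have support contained in $\supp(\pset{P}^{(k)})$. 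I would state and justify this support containment as the first step, since it is what lets chordality do its work.

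Next I would fix an arbitrary edge $(x_p, x_q) \in G(\pset{P}_i)$ and locate a polynomial $P \in \pset{P}_i^{(j)}$ with $p, q \leq j \leq k$ and $x_p, x_q \in \supp(P)$. The structure of $\pset{P}_i$ in \eqref{eq:srs-def} shows that $P$ is one of three types: a polynomial inherited unchanged from $\pset{P} \setminus \{T_1, T_2\}$, one of the new polynomials $H_i$, or one of the initials $\ini(H_{i+1}), \ldots, \ini(H_r)$. For the inherited case the edge already lies in $G(\pset{P}) \subset G(\pset{F})$ by hypothesis. For the other two cases, the support containment from the first step gives $x_p, x_q \in \supp(\pset{P}^{(k)})$, hence $(x_p, x_k), (x_q, x_k) \in G(\pset{P}^{(k)}) \subset G(\pset{F})$, and since $x_1 < \cdots < x_n$ is a perfect elimination ordering with $p, q \leq k$, chordality forces $(x_p, x_q) \in G(\pset{F})$. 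Together with the trivial containment of vertices $\supp(\pset{P}_i) \subset \supp(\pset{P}) \cup \supp(\{H_2, \ldots, H_r\}) \subset \supp(\pset{F})$, this yields $G(\pset{P}_i) \subset G(\pset{F})$.

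The main obstacle I anticipate is the first step: establishing rigorously that the support of every $H_i$ in the subresultant regular subchain is confined to $\supp(T_1) \cup \supp(T_2)$. Each subresultant $S_j$ is a determinant of a submatrix $\mathbf{M}_{ij}$ of the Sylvester matrix, whose entries are coefficients of $T_1$ and $T_2$ viewed as polynomials in $x_k$; those coefficients live in $\fk[x_1, \ldots, x_{k-1}, x_{k+1}, \ldots, x_n]$ and their supports are contained in $\supp(T_1) \cup \supp(T_2) \setminus \{x_k\}$. Since a determinant is a polynomial combination (sums of products) of its entries, expanding it cannot introduce any variable outside the supports of those entries; the same reasoning applies to $\ini(H_i)$, which is itself a coefficient determinant. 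I would make this explicit by recalling the determinantal formula for $S_j = \sum_{i=0}^{j} |\mathbf{M}_{ij}| x_k^i$ from Section~\ref{sec:srs} and noting that support is preserved under ring operations. Once this support lemma is in hand, the chordality argument is essentially identical to the left-child case of Wang's method, so the remainder of the proof is routine.
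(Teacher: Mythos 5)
Your proposal is correct and takes essentially the same approach as the paper's proof: both rest on the containment $\supp(\ini(H_i)) \subset \supp(H_i) \subset \supp(T_1) \cup \supp(T_2)$ and then place each new edge $(x_p, x_q)$ into $G(\pset{F})$ via the common leading variable $x_k$ of $T_1, T_2 \in \pset{P}^{(k)}$ together with the perfect elimination ordering. The only differences are cosmetic: the paper asserts the support containment without justification and splits into the cases $x_p, x_q \in \supp(T_1)$, $x_p, x_q \in \supp(T_2)$, and the mixed case, whereas you justify the containment via the determinantal definition of subresultants and handle all cases uniformly through chordality.
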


\begin{proof}
Since $H_2, \ldots, H_r$ form a subresultant regular subchain of $T_1$ and $T_2$ with $T_1, T_2 \in \pset{P}$, we know that $\supp(\ini(H_i)) \subset \supp(H_i) \subset \supp(T_1) \cup \supp(T_2) \subset \supp(\pset{P}) \subset \supp(\pset{F})$ for each $i=2, \ldots, r$ and thus $\supp(\pset{P}_{i}) \subset \supp(\pset{F})$. Next we show that for any edge $(x_p, x_q) \in G(\pset{P}_{i})$, we have $(x_p, x_q) \in G(\pset{F})$.

For each $i=2, \ldots, r$, if there exists a polynomial $T \in \pset{P} \setminus \{T_1, T_2\}$ such that $x_p, x_q \in \supp(T)$, then $(x_p, x_q) \in G(\pset{P}) \subset G(\pset{F})$ by the assumption. Else there exists a polynomial $T \in \{H_i, \ini(H_{i+1}), \ldots, \ini(H_{r})\}$ in the case of $2 \leq i <r$ or a polynomial $T = H_r$ in the case of $i=r$ such that $x_p, x_q \in \supp(T)$, and thus $x_p, x_q \in \supp(T_1) \cup \supp(T_2)$.

If $x_p, x_q \in \supp(T_1)$ or $x_p, x_q \in \supp(T_2)$, then clearly $(x_p, x_q) \in G(\pset{P}) \subset G(\pset{F})$. Else, without loss of generality, we assume that $x_p \in \supp(T_1)$ and $x_q \in \supp(T_2)$. Noting that $T_1, T_2 \in \pset{P}^{(k)}$, we know $x_k \in \supp(T_1)$ and $\supp(T_2)$. Then $(x_p, x_k), (x_q, x_k) \in G(\pset{P}) \subset G(\pset{F})$ and $x_p, x_q \leq x_k$. The chordality of $G(\pset{F})$ implies the inclusion $(x_p, x_q) \subset G(\pset{F})$. 
\end{proof}

\begin{theorem}
  \label{thm:chordalSRS}
Let $\pset{F} \subset \kx$ be a chordal polynomial set with $x_1 < \cdots < x_n$ as one perfect elimination ordering of $G(\pset{F})$. Then for any node $(\pset{P}, \pset{Q}, k)$ in the decomposition tree of $\algSub(\pset{F})$, we have $G(\pset{P}) \subset G(\pset{F})$ and $G(\pset{Q}) \subset G(\pset{F})$. 

\end{theorem}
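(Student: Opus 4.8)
The plan is to mirror the inductive proof of Theorem~\ref{thm:wang} for Wang's method, since the decomposition tree of $\algSub(\pset{F})$ has the same rooted-tree structure (with $(\pset{F}, \emptyset, n)$ as root) and the only genuinely new branching is the computation of the subresultant regular subchain. I would induct on the depth $d$ of the node $(\pset{P}, \pset{Q}, k)$ in the decomposition tree. The base case $d=0$ is the root $(\pset{F}, \emptyset, n)$, for which $G(\pset{F}) \subset G(\pset{F})$ and $G(\emptyset) = \emptyset \subset G(\pset{F})$ hold trivially.

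For the inductive step, I would assume that for the parent node $(\tilde{\pset{P}}, \tilde{\pset{Q}}, k)$ of depth $d$ we have $G(\tilde{\pset{P}}) \subset G(\pset{F})$ and $G(\tilde{\pset{Q}}) \subset G(\pset{F})$, and then verify the claim for each type of child node at depth $d+1$, as classified by \eqref{eq:srs-def}. The child $(\pset{P}', \pset{Q}', k)$ from line~\ref{line:Pprime} replaces $T_2$ by $\ini(T_2)$ and $\tail(T_2)$, so the inclusion $G(\pset{P}') \subset G(\tilde{\pset{P}}) \subset G(\pset{F})$ follows by exactly the argument of Proposition~\ref{prop:wang-right}, while $\pset{Q}' = \tilde{\pset{Q}}$ gives the second inclusion directly. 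For the children $(\pset{P}_i, \pset{Q}_i, k)$ with $i = 2, \ldots, r$ coming from the subresultant regular subchain, the inclusion $G(\pset{P}_i) \subset G(\pset{F})$ is precisely the conclusion of Proposition~\ref{prop:td-srs}, whose hypothesis $G(\tilde{\pset{P}}) \subset G(\pset{F})$ is supplied by the inductive assumption. It remains to handle $\pset{Q}_i = \tilde{\pset{Q}} \cup \{\ini(T_2)\} \cup \{\ini(H_i)\}$: here I would note that $G(\tilde{\pset{Q}}) \subset G(\pset{F})$ by assumption, and that both $\ini(T_2)$ and $\ini(H_i)$ have support contained in $\supp(\tilde{\pset{P}})$, so any edge they contribute to $G(\pset{Q}_i)$ is an edge arising from a single polynomial of $\pset{P}_i$ (or from $T_2$), hence an edge of $G(\pset{F})$ by the same chordality reasoning used in Proposition~\ref{prop:td-srs}.

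The main obstacle, and the reason the modification to line~\ref{line:end} discussed before Algorithm~\ref{alg:srs} matters, is controlling the inequation set $\pset{Q}$. The whole argument for $G(\pset{Q}_i) \subset G(\pset{F})$ rests on the fact that the newly added polynomials $\ini(T_2)$ and $\ini(H_i)$ depend only on variables already present in $\tilde{\pset{P}}$ and introduce no edge outside the chordal structure; if instead one took pseudo remainders $\prem(Q, T_2)$ as in the original algorithm, a $Q$ involving some $x_l > x_k$ could produce cross-variable edges destroying the subgraph property, which is exactly why the paper replaced that step. I would therefore state carefully that each edge of $G(\ini(T_2))$ and $G(\ini(H_i))$ lies in $G(T_2)$ respectively $G(H_i)$, and that $G(H_i) \subset G(\pset{F})$ was already established inside the proof of Proposition~\ref{prop:td-srs} via $\supp(H_i) \subset \supp(T_1) \cup \supp(T_2)$ together with chordality.

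Putting these cases together closes the induction: every node $(\pset{P}, \pset{Q}, k)$ satisfies both $G(\pset{P}) \subset G(\pset{F})$ and $G(\pset{Q}) \subset G(\pset{F})$, which is the assertion of the theorem.
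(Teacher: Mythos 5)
Your proof is correct and takes essentially the same route as the paper, which proves Theorem~\ref{thm:chordalSRS} by exactly this depth induction from Theorem~\ref{thm:wang}, invoking Proposition~\ref{prop:wang-right} for the child $(\pset{P}',\pset{Q}',k)$ and Proposition~\ref{prop:td-srs} for the children $(\pset{P}_i,\pset{Q}_i,k)$. Your explicit check that the edges contributed by $\ini(T_2)$ and $\ini(H_i)$ to $G(\pset{Q}_i)$ lie in $G(\pset{F})$ --- via $\supp(\ini(H_i))\subset\supp(H_i)\subset\supp(T_1)\cup\supp(T_2)$ together with the chordality argument --- merely spells out a detail the paper leaves implicit when it says the proof is ``almost the same'' as that of Theorem~\ref{thm:wang}.
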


\begin{proof}
With Propositions~\ref{prop:wang-right} and \ref{prop:td-srs} the proof of this theorem is almost the same as that of Theorem~\ref{thm:wang}. 
\end{proof}

  \begin{corollary}\label{cor:srs-ts}
    Let $\pset{F} \subset \kx$ be a chordal polynomial set with $x_1 < \cdots < x_n$ as one perfect elimination ordering of $G(\pset{F})$ and $\pset{T}_1, \ldots, \pset{T}_r$ be the triangular sets computed by $\algSub(\pset{F})$. Then $G(\pset{T}_i) \subset G(\pset{F})$ for $i=1, \ldots, r$. 
  \end{corollary}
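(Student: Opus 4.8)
The plan is to deduce Corollary~\ref{cor:srs-ts} directly from Theorem~\ref{thm:chordalSRS}, exactly as Corollary~\ref{cor:wang-ts} was deduced from Theorem~\ref{thm:wang}. The essential observation is that every triangular set $\pset{T}_i$ produced by $\algSub(\pset{F})$ arises from a terminal node of the decomposition tree: when the outer loop index $k$ reaches $0$, the surviving tuples $(\pset{P}, \pset{Q}, 0) \in \Phi^{(0)}$ with $\pset{P}^{(0)} \setminus \{0\} = \emptyset$ contribute the output triangular systems $(\pset{P}\setminus\{0\}, \pset{Q})$, and the first component $\pset{T}_i = \pset{P}\setminus\{0\}$ is precisely the computed triangular set.

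First I would note that such a node $(\pset{P}, \pset{Q}, 0)$ is itself a node in the decomposition tree of $\algSub(\pset{F})$, so Theorem~\ref{thm:chordalSRS} applies verbatim to give $G(\pset{P}) \subset G(\pset{F})$. Then I would observe that removing the zero polynomial does not affect the associated graph, since $0$ contributes no variables to the support and hence no vertices or edges; that is, $G(\pset{P}\setminus\{0\}) = G(\pset{P}) \subset G(\pset{F})$. Combining these two facts yields $G(\pset{T}_i) \subset G(\pset{F})$ for each $i=1, \ldots, r$, which is the desired conclusion.

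No step here is a genuine obstacle, as the corollary is an immediate specialization of the theorem; the only point requiring a word of care is the identification of the output triangular sets with the first components of terminal nodes, together with the harmless deletion of the zero polynomial. The proof is therefore a one-line appeal to Theorem~\ref{thm:chordalSRS}, mirroring the proof of Corollary~\ref{cor:wang-ts}.

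\begin{proof}
  Straightforward from Theorem~\ref{thm:chordalSRS} with the fact that each triangular set $\pset{T}_i$ is obtained as $\pset{P}\setminus\{0\}$ from some terminal node $(\pset{P}, \pset{Q}, 0)$ in the decomposition tree, and that deleting the zero polynomial does not change the associated graph, so that $G(\pset{T}_i) = G(\pset{P}) \subset G(\pset{F})$.
\end{proof}
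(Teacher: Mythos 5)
Your proposal is correct and follows essentially the same route as the paper, which also deduces the corollary directly from Theorem~\ref{thm:chordalSRS} by identifying each $\pset{T}_i$ with the first component of a node in the decomposition tree. Your extra remarks on the terminal nodes $(\pset{P}, \pset{Q}, 0)$ and on the harmlessness of deleting the zero polynomial merely make explicit what the paper leaves implicit.
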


  \begin{proof}
    Straightforward from Theorem~\ref{thm:chordalSRS}. 
  \end{proof}

\section{Subresultant-based algorithm for regular decomposition in top-down style}
\label{sec:regular}

In this section we study the change of associated graphs of polynomial sets in an algorithm for regular decomposition in top-down style using subresultant regular subchains. Due to the strict constraints on regular systems in regular decomposition, in this algorithm there exist interactions between polynomials from the two different sets representing equations and inequations. This interaction is mainly in the form of computing a subresultant regular subchain of a polynomial from the equation set and another polynomial from the inequation set. 

\subsection{Subresultant-based algorithm for regular decomposition revisited}
\label{sec:StateReg}

Similarly as in the previous sections, we first reformulate an algorithm in top-down style for regular decomposition due to Wang as Algorithm~\ref{alg:reg}, and interested readers are referred to \cite{w00c,W2001E} for the details. In this algorithm, the subroutine $\max(P, Q)$ (respectively $\min(P, Q)$) for two polynomials $P$ and $Q$ such that $\lv(P) = \lv(Q) = x_k$ returns the polynomial in $\{P, Q\}$ whose degree in $x_k$ is greater than or equal to (respectively is lower than) that of the other. 

\begin{algorithm}[!ht]
  \caption{Subresultant-based algorithm for regular decomposition    $~~~~~~~~~~~~~~~~~~~~~~~~~\Psi:=\algReg(\pset{F})$}
\label{alg:reg}

\small

\KwIn{$\pset{F}$, a polynomial set in $\kx$}

\KwOut{$\Psi$, a set of finitely many regular systems which form a regular decomposition of $\pset{F}$}

\BlankLine

[Replace line~\ref{line:replace} of Algorithm~\ref{alg:srs} by the following lines] 

         \If{$\pset{Q}^{(k)} = \emptyset$ \label{line:replace-start}}
         {
           $\Phi := \Phi \cup \{(\pset{P}, \pset{Q}, k-1)\}$\;
           {\bf next};
         }  
         \eIf{$\#(\pset{P}^{(k)}) = 1$}
         {
            $T_2 := \pop(\pset{P}^{(k)})$, $T_1 := \pop(\pset{Q}^{(k)})$\;
            $(H_2, \ldots, H_r):= \srs(\max(T_1, T_2), \min(T_1, T_2))$\;
            \For{$i=2, \ldots, r-1$}
            {
               $\Phi \!:=\! \Phi \cup \{(\pset{P}\setminus \{T_2\} \!\cup\! \{\pquo(T_2, H_i), \ini(H_{i+1}), \ldots, \ini(H_r)\}, \pset{Q}\cup \{\ini(H_i)\}, k)\}$\label{line:reg-case3-1}\;
            }
            \eIf{$x_k \in \supp(H_r)$}  
            {
               $\pset{Q}' = \pset{Q} \cup \{\ini(H_r)\}$\;
            }
            {
               $\pset{Q}' = \pset{Q} \setminus \{T_1\} \cup \{\ini(H_r)\}$\;
            }
          $\Phi := \Phi \cup \{(\pset{P} \setminus \{T_2\} \cup \{\pquo(T_2, H_r)\}, \pset{Q}', k) \}$\label{line:reg-case3-2}\;
         }
         {
            \For{$Q \in \pset{Q}^{(k)}$}
            {
                $\Phi := \Phi \cup \{(\pset{P} \cup \{\ini(Q)\}, \pset{Q}\setminus \{Q\} \cup \{\tail(Q)\}, k)\}$\label{line:reg-case4-1}\;
                $\pset{Q} := \pset{Q} \cup \{\ini(Q)\}$\label{line:reg-case4-2}\;
            }
            $\Phi := \Phi \cup \{(\pset{P}, \pset{Q}, k-1)\}$\label{line:replace-end}\;
         }

\end{algorithm}

The replacement with lines~\ref{line:replace-start}--\ref{line:replace-end} in Algorithm~\ref{alg:reg} are for, when $\#\pset{P}^{(k)} =0$ or $1$ (which means the reduction with respect to $x_k$ is done), handling the polynomials in the inequation set $\pset{Q}$. When $\#\pset{P}^{(k)} = 1$, the splitting introduced in lines~\ref{line:reg-case3-1} and \ref{line:reg-case3-2} and for a node $(\pset{P}, \pset{Q}, k)$ results in its child nodes $(\pset{P}_i, \pset{Q}_i, k)~(i=2, \ldots, r)$ in the decomposition tree, where $\pset{P}_i$ and $\pset{Q}_i$ are summarized below. 
\begin{equation}
  \label{eq:reg-def}
  \begin{split}
    \pset{P}_i &: = \left\{
      \begin{tabular}[l]{ll}
$\pset{P}\setminus \{T_2\} \cup \{\pquo(T_2, H_i), \ini(H_{i+1}), \ldots, \ini(H_r)\}$, & $i=2, \ldots, r-1$ \\
        $\pset{P}\setminus \{T_2\} \cup \{\pquo(T_2, H_r)\}$, & $i=r$,
      \end{tabular}
    \right.\\
    \pset{Q}_i  &: =  \left\{
      \begin{tabular}[l]{ll}
        $\pset{Q}\cup \{\ini(H_i)\}$, & $i=2, \ldots, r-1$, \\
        $\pset{Q} \cup \{\ini(H_r)\}$ & $i=r$ and $x_k \in \supp(H_r)$, \\
        $\pset{Q} \setminus \{T_1\} \cup \{\ini(H_r)\}$ & $i=r$ and $x_k \not \in \supp(H_r)$,
      \end{tabular}
      \right.
        \end{split}
\end{equation}
when $\#\pset{P}^{(k)} = 0$,  the splitting for $(\pset{P}, \pset{Q}, k)$ in lines~\ref{line:reg-case4-1} and \ref{line:replace-end} results in its child nodes $(\pset{P}', \pset{Q}', k)$ and $(\pset{P}_Q, \pset{Q}_Q, k)~(\forall Q\in \pset{Q})$ in the decomposition tree as follows. 
\begin{equation}
  \label{eq:reg-def-2}
  \begin{split}
    \pset{P}' & := \pset{P}, \quad \pset{Q}' := \pset{Q} \cup \{\ini(Q):\, Q\in \pset{Q}\},\\ 
    \pset{P}_{Q} &:= \pset{P} \cup \{\ini(Q)\}, \quad \pset{Q}_{Q} := \pset{Q} \setminus \{Q\} \cup \{\tail(Q)\}, \quad\forall Q \in \pset{Q}^{(k)}.\\
        \end{split}
\end{equation}

Since the new lines~\ref{line:replace-start}--\ref{line:replace-end} in Algorithm~\ref{alg:reg} can be viewed as post-processing after the reduction with respect to a variable $x_k$ is finished, the underlying difference between the decomposition tree of Algorithm~\ref{alg:reg} and that of Algorithm~\ref{alg:srs} is that the former tree has additional layers to represent the splittings described above, succeeding the bottom layer, where each node $(\pset{P}, \pset{Q}, k)$ is such that $\#\pset{P}^{(k)} < 1$, in each section for the corresponding variable $x_k = x_n, \ldots, x_1$ of the latter tree. These additional layers are illutrated in the Figure~\ref{fig:decTreeReg}. 

    \begin{figure}[ht]
      \centering
      \includegraphics[width=14cm,keepaspectratio]{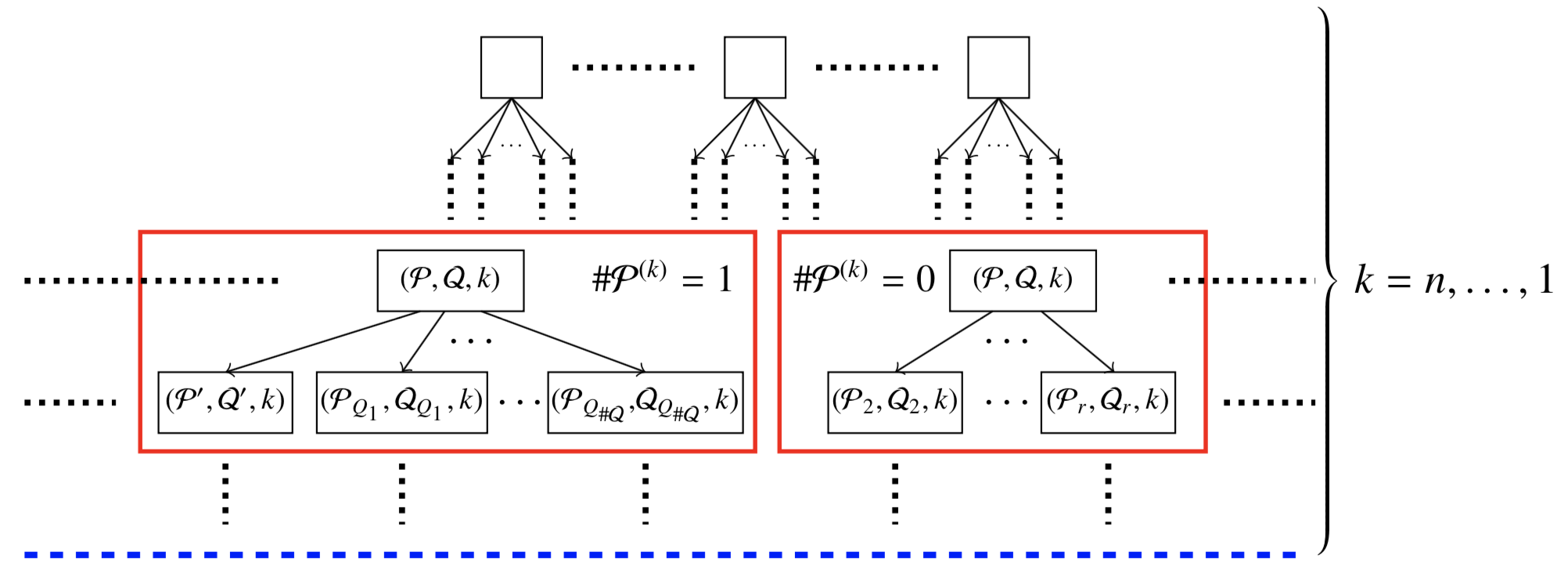}
      \caption{Additional layers in the decomposition tree of $\algReg(\pset{F})$ compared with that of $\algSub(\pset{F})$}
\label{fig:decTreeReg}
\end{figure}

\subsection{Chordality of polynomial sets in subresultant-based algorithm for regular decomposition}
\label{sec:chordalReg}

We prove the following theorem for the subresultant-based algorithm for regular decomposition as the analogue to Theorems~\ref{thm:wang} and \ref{thm:chordalSRS} for the two algorithms presented in Sections~\ref{sec:wang} and \ref{sec:subres}. 

\begin{theorem}
  \label{thm:chordalReg}
  Let $\pset{F} \subset \kx$ be a chordal polynomial set with $x_1 < \cdots < x_n$ as one perfect elimination ordering of $G(\pset{F})$. Then for any node $(\pset{P}, \pset{Q}, k)$ in the decomposition tree of $\algReg(\pset{F})$, we have $G(\pset{P}) \subset G(\pset{F})$ and $G(\pset{Q}) \subset G(\pset{F})$.
\end{theorem}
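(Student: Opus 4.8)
The plan is to prove Theorem~\ref{thm:chordalReg} by induction on the depth $d$ of the node $(\pset{P}, \pset{Q}, k)$ in the decomposition tree of $\algReg(\pset{F})$, exactly mirroring the inductive proofs of Theorems~\ref{thm:wang} and \ref{thm:chordalReg}. The base case $d=0$ is the root $(\pset{F}, \emptyset, n)$, for which $G(\pset{F}) \subset G(\pset{F})$ and $G(\emptyset) \subset G(\pset{F})$ trivially hold. For the inductive step, I would assume that the parent node $(\tilde{\pset{P}}, \tilde{\pset{Q}}, k)$ of depth $d$ satisfies $G(\tilde{\pset{P}}) \subset G(\pset{F})$ and $G(\tilde{\pset{Q}}) \subset G(\pset{F})$, and then verify the conclusion for each type of child node produced by the algorithm. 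The key observation is that $\algReg$ shares its entire reduction phase (lines~\ref{line:srs-minimal}--\ref{line:end} inherited from $\algSub$) with the subresultant-based algorithm for triangular decomposition, so for all nodes arising from that phase the inclusions follow immediately from Propositions~\ref{prop:wang-right} and \ref{prop:td-srs}, just as in the proof of Theorem~\ref{thm:chordalSRS}.

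The genuinely new work concerns the additional layers created by lines~\ref{line:replace-start}--\ref{line:replace-end}, where the inequation set $\pset{Q}$ is processed, so I would split the inductive step according to the child-node structure recorded in \eqref{eq:reg-def} and \eqref{eq:reg-def-2}. For the case $\#\pset{P}^{(k)}=1$ governed by \eqref{eq:reg-def}, the child nodes $(\pset{P}_i, \pset{Q}_i, k)$ are built from the subresultant regular subchain $H_2, \ldots, H_r$ of $T_2 \in \pset{P}^{(k)}$ and $T_1 \in \pset{Q}^{(k)}$. Here I would first establish a proposition analogous to Proposition~\ref{prop:td-srs} but accommodating the cross-set interaction: since both $T_1$ and $T_2$ have leading variable $x_k$ and $\supp(H_i) \subset \supp(T_1) \cup \supp(T_2) \subset \supp(\pset{P}) \cup \supp(\pset{Q})$, any edge $(x_p, x_q)$ appearing in a subresultant or in $\pquo(T_2, H_i)$ either already lies in $G(\tilde{\pset{P}})$ or $G(\tilde{\pset{Q}})$, or connects a vertex of $T_1$ to a vertex of $T_2$ through the common vertex $x_k$. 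In the latter situation, chordality of $G(\pset{F})$ together with $(x_p, x_k), (x_q, x_k) \in G(\pset{F})$ and $x_p, x_q \leq x_k$ forces $(x_p, x_q) \in G(\pset{F})$. The $\pset{Q}_i$ components only add initials $\ini(H_i)$, whose supports are contained in those of the $H_i$, so the same chordality argument applies. For the case $\#\pset{P}^{(k)}=0$ governed by \eqref{eq:reg-def-2}, the nodes merely move $\ini(Q)$ and $\tail(Q)$ between $\pset{P}$ and $\pset{Q}$; since $\supp(\ini(Q)) \cup \supp(\tail(Q)) \subset \supp(Q)$, every edge introduced is already an edge of $G(\tilde{\pset{Q}}) \subset G(\pset{F})$, in direct analogy with Proposition~\ref{prop:wang-right}.

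The main obstacle I anticipate is the cross-set interaction through $\srs(\max(T_1,T_2), \min(T_1,T_2))$ with $T_1 \in \pset{Q}$ and $T_2 \in \pset{P}$, because unlike in $\algSub$ the two input polynomials now come from different sets, yet the resulting subresultants are distributed into \emph{both} $\pset{P}_i$ (as $\pquo(T_2, H_i)$ or $\ini(H_{i+1}), \ldots, \ini(H_r)$) and $\pset{Q}_i$ (as $\ini(H_i)$). This is precisely the interaction flagged in the remark preceding Section~\ref{sec:StateReg} and is the reason the top-down definition was strengthened to track $\pset{Q}$. The resolution is that the inductive hypothesis supplies $G(\tilde{\pset{P}}), G(\tilde{\pset{Q}}) \subset G(\pset{F})$ \emph{simultaneously}, so both $T_1$ and $T_2$ have support inside $G(\pset{F})$ with $\lv(T_1)=\lv(T_2)=x_k$; the shared leading variable $x_k$ is exactly the apex needed to invoke chordality and close any edge between a $T_1$-vertex and a $T_2$-vertex. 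I would therefore state and prove the required interaction proposition carefully (the analogue of Proposition~\ref{prop:td-srs}), after which the induction concludes that both $G(\pset{P}) \subset G(\pset{F})$ and $G(\pset{Q}) \subset G(\pset{F})$ for every node, completing the proof.
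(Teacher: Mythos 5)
Your proposal is correct and takes essentially the same approach as the paper's proof: induction on the depth of the node, with the nodes from the inherited reduction phase dispatched by Propositions~\ref{prop:wang-right} and \ref{prop:td-srs}, the cross-set case \eqref{eq:reg-def} closed by the chordality argument with the shared leading variable $x_k$ as apex (made possible by carrying the inductive hypothesis on $G(\pset{P})$ and $G(\pset{Q})$ simultaneously), and the $\#\pset{P}^{(k)}=0$ case \eqref{eq:reg-def-2} verified directly from $\supp(\ini(Q)) \cup \supp(\tail(Q)) \subset \supp(Q)$. The only blemish is a harmless typo where you cite Theorem~\ref{thm:chordalReg} itself rather than Theorem~\ref{thm:chordalSRS} as the template being mirrored.
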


\begin{proof}
  We make induction on the depth of the node in the decomposition tree. The root of the decomposition tree is $(\pset{F}, \emptyset, n)$ and obviously the conclusion holds. Now suppose that for any node $(\pset{P}', \pset{Q}', k)$ of depth $d$ in the decomposition tree, we have $G(\pset{P}') \subset G(\pset{F})$ and $G(\pset{Q}') \subset G(\pset{F})$. Let $(\pset{P}, \pset{Q}, k)$ be of depth $d+1$ and $(\pset{P}', \pset{Q}', k')$ be its parent node of depth $d$ in the decomposition tree. Next we prove that $G(\pset{P}) \subset G(\pset{F})$ and $G(\pset{Q}) \subset G(\pset{F})$.

  Note that there are four cases where a new child node is generated from its parent node: case (1) corresponds to line~\ref{line:Pprime} of Algorithm~\ref{alg:srs}, case (2) to lines~\ref{line:sub-split} and \ref{line:end} of Algorithm~\ref{alg:srs}, case (3) to lines~\ref{line:reg-case3-1} and \ref{line:reg-case3-2} of Algorithm~\ref{alg:reg}, and case (4) to lines~\ref{line:reg-case4-1} and \ref{line:reg-case4-2} of Algorithm~\ref{alg:reg}. The conclusion has been proved for case (1) with Proposition~\ref{prop:wang-right} and for case (2) with Proposition~\ref{prop:td-srs}.

Next we prove the conclusion for case (3), where the node $(\pset{P}, \pset{Q}, k)$ is generated in line~\ref{line:reg-case3-1} or \ref{line:reg-case3-2} of Algorithm~\ref{alg:reg}. In this case, there exist polynomials $T_1' \in \pset{Q}^{(k')}$ and $T_2' \in \pset{P}^{(k')}$ such that $H_2', \ldots, H_r'$ are the subresultant regular subchain of $T_1'$ and $T_2'$. If $(\pset{P}, \pset{Q}, k)$ corresponds to the node $(\pset{P}_i, \pset{Q}_i, k)$ for some integer $i~(2 \leq i < r)$ in \eqref{eq:reg-def}, then 
  $$ \pset{P} = \pset{P'}\setminus \{T_2'\} \cup \{\pquo(T'_2, H'_i), \ini(H'_{i+1}), \ldots, \ini(H'_r)\}.$$
  Clearly $\supp(\pset{P}) \subset \supp(\pset{P}') \cup \supp(\pset{Q}') \subset \supp(\pset{F})$. For any $(x_p, x_q) \in G(\pset{P})$, if there exists a polynomial $T' \in \pset{P}' \setminus \{T_2'\}$ such that $x_p, x_q \in \supp(T')$, then $(x_p, x_q) \in G(\pset{P}') \subset G(\pset{F})$. Otherwise there exists a polynomial $T \in \{\pquo(T'_2, H'_i), \ini(H'_{i+1}), \ldots, \ini(H'_r)\}$ such that $x_p, x_q \in \supp(T) \subset \supp(T_1') \cup \supp(T_2')$. Furthermore, there are three cases: (a) If $x_p, x_q \in \supp(T_1')$, then $(x_p, x_q) \in G(\pset{Q}') \subset G(\pset{F})$ by the inductive assumption. (b) If $x_p, x_q \in \supp(T_2')$, then $(x_p, x_q) \in G(\pset{P}') \subset G(\pset{F})$ by the inductive assumption. (c) If, without loss of generality, $x_p \in \supp(T_1')$ and $x_q \in \supp(T_2')$, by $T_1' \in \pset{Q}'^{(k')}$ we know $x_p \leq x_{k'}$ and $x_p, x_{k'} \in \supp(T_1')$ and thus $(x_p, x_{k'}) \in G(\pset{Q}') \subset G(\pset{F})$ by the inductive assumption; by $T_2' \in \pset{P}'^{(k')}$ we know similarly that $x_q \leq x_{k'}$, $(x_q, x_{k'}) \in G(\pset{P}') \subset G(\pset{F})$. Then by the chordality of $\pset{F}$ we have $(x_p, x_q) \in G(\pset{F})$. Now we prove $G(\pset{P}) \subset G(\pset{F})$ for $\pset{P} = \pset{P}_i$ for some integer $i~(2 \leq i < r)$ in \eqref{eq:reg-def}. When $(\pset{P}, \pset{Q}, k)$ corresponds to $(\pset{P}_r, \pset{Q}_r, k)$ in \eqref{eq:reg-def}, the inclusion $G(\pset{P}) \subset G(\pset{F})$ can be proved in the same way as above, so can the inclusion $G(\pset{Q}) \subset G(\pset{F})$, with the observation that $\supp(\pquo(T_2', H_r')), \supp(\ini(H_i')) \subset \supp(T_1') \cup \supp(T_2')$ for $i=2, \ldots, r$. 

The conclusion can be derived for case (4), where the node $(\pset{P}, \pset{Q}, k)$ is generated in line~\ref{line:reg-case4-1} or \ref{line:reg-case4-2} of Algorithm~\ref{alg:reg}, with easy verification with \eqref{eq:reg-def-2}. 

\end{proof}

  \begin{corollary}\label{cor:srs-ts}
    Let $\pset{F} \subset \kx$ be a chordal polynomial set with $x_1 < \cdots < x_n$ as one perfect elimination ordering of $G(\pset{F})$ and $\pset{T}_1, \ldots, \pset{T}_r$ be the triangular sets computed by $\algReg(\pset{F})$. Then $G(\pset{T}_i) \subset G(\pset{F})$ for $i=1, \ldots, r$. 
  \end{corollary}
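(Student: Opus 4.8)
The plan is to derive this corollary directly from Theorem~\ref{thm:chordalReg}, mirroring the proofs of Corollaries~\ref{cor:wang-ts} and \ref{cor:srs-ts} for the two earlier algorithms. The key observation I would make is that every triangular set $\pset{T}_i$ returned by $\algReg(\pset{F})$ arises from a terminal node of the decomposition tree: following the final collection loop inherited from Algorithm~\ref{alg:srs}, each output regular system $(\pset{T}_i, \pset{U}_i)$ comes from some node $(\pset{P}, \pset{Q}, 0) \in \Phi^{(0)}$ satisfying $\pset{P}^{(0)} \setminus \{0\} = \emptyset$, and for this node $\pset{T}_i = \pset{P} \setminus \{0\}$.

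First I would invoke Theorem~\ref{thm:chordalReg} at this node to conclude $G(\pset{P}) \subset G(\pset{F})$. Then I would note that deleting the zero polynomial from $\pset{P}$ removes no vertices or edges from the associated graph, since $\supp(0) = \emptyset$; hence $G(\pset{T}_i) = G(\pset{P} \setminus \{0\}) \subset G(\pset{P})$. Composing the two inclusions yields $G(\pset{T}_i) \subset G(\pset{F})$ for each $i = 1, \ldots, r$, which is precisely the assertion.

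The argument requires no case analysis or new combinatorial input: all the work of tracking how the subresultant-based splittings and the additional inequation-handling layers of $\algReg$ affect the associated graphs was already carried out inductively in the proof of Theorem~\ref{thm:chordalReg}. Consequently there is no genuine obstacle here; the only point worth stating explicitly is the harmless passage from the node polynomial set $\pset{P}$ to the triangular set $\pset{T}_i = \pset{P} \setminus \{0\}$, which preserves the subgraph relation because the removed constant polynomial contributes nothing to the graph.
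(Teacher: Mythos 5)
Your proposal is correct and follows essentially the same route as the paper, whose proof is simply ``Straightforward from Theorem~\ref{thm:chordalReg}'': you invoke that theorem at the terminal node yielding each $\pset{T}_i$ and observe that discarding the zero polynomial changes nothing in the associated graph. Your explicit remark that $\supp(0)=\emptyset$ makes the passage from $\pset{P}$ to $\pset{T}_i=\pset{P}\setminus\{0\}$ harmless is a fair elaboration of the same argument, not a different one.
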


  \begin{proof}
    Straightforward from Theorem~\ref{thm:chordalReg}. 
  \end{proof}

  \begin{remark}\rm
    Simple sets, also called squarefree regular sets, are a stronger kind of triangular sets than regular ones (there are corresponding concepts of simple systems too) \cite{w98d,LI2010D,M2012d}. They are particularly uesful for counting the numbers of solutions of polynomial systems and have been applied to study differential equations \cite{B14c,B2011a,GR2016l}. An algorithm in top-down style is proposed in \cite{w98d} for decomposing an arbitrary polynomial set into simple systems over fields of characteristic zero by using subresultant regular subchains. This algorithm is similar to Algorithm~\ref{alg:reg} and the authors have the feeling that it should also ``preserve'' chordality of the input polynomial set. But it seems not easy to formulate the decomposition process and the corresponding decomposition tree of this algorithm as ``cleanly'' as what have been done here in this paper for Algorithm~\ref{alg:reg}. 
  \end{remark}


\section{Sparse triangular decomposition in top-down style}
\label{sec:sparse}

\subsection{Variable sparsity of polynomial sets}
\label{sec:sparsity}

When speaking of a sparse polynomial set $\pset{F} \subset \kx$, one usually means that the percentage of terms effectively appearing in $\pset{F}$ in all the possible terms in the variables $x_1, \ldots, x_n$ up to a certain degree is low. This kind of sparsity for polynomial sets is convenient for the computation of \grobner bases which is essentially based on reduction with respect to terms. In fact, efficient algorithms for computing \grobner bases for sparse polynomial sets defined in this way have been proposed, implemented, and analyzed \cite{FSS2014s,BFT18t}. 

Instead of terms of polynomials, triangular sets focus on the variables of polynomials. As exploited in \cite{C2017c}, the sparsity of the polynomial sets with respect to their variables is reflected in their associated graphs. 

\begin{definition}\rm
  \label{def:var-sparsity}
Let $\pset{F} \subset \kx$ be a polynomial set and $G(\pset{F}) = (V, E)$ be its associated graph. Then the \emph{variable sparsity} $s_v(\pset{F})$ of $\pset{F}$ is defined to be
\begin{equation}
  \label{eq:sparsity}
s_v(\pset{F}) = |E| / \binom{2}{|V|},  
\end{equation}
where the denominator is the number of edges of a complete graph composed of $|V|$ vertices.
\end{definition}

A similar notion called the correlative sparsity was proposed in \cite{WKKM06s} to represent the variable sparsity of a multivariate polynomial for the purpose of decomposing it into sums of squares. 

The associated graph $G(\pset{F})$ can be extended to a weighted one $G^w(\pset{F})$ by associating the number $\#\{F\in \pset{F}: x_i, x_j \in \supp(F)\}$ to each edge $(x_i, x_j)$ of $G(\pset{F})$. Let $G^{w}(\pset{F}) = (V, E)$, with the weight $w_e$ for each $e\in E$, be the weighted associated graph of $\pset{F}$. Then the \emph{weighted variable sparsity} $s_v^w(\pset{F})$ of $\pset{F}$ can be defined as

\begin{equation}
  \label{eq:WeightSparsity}
s_v^w(\pset{F}) = \frac{\sum_{e\in E} w_e}{\#\pset{F}\cdot \binom{2}{|V|}}.
\end{equation}

\begin{example}\rm
For the following polynomial set  \begin{equation*}
\begin{aligned}
\pset{F}=\{&x_1x_4-x_2x_3,x_2x_5-x_3x_4,x_3x_6-x_4x_5,x_4x_7-x_5x_6,x_5x_8-x_6x_7,x_6x_9-x_7x_8,\\
&x_7x_{10}-x_8x_9,x_8 x_{11}-x_9x_{10},x_9x_{12}-x_{10}x_{11},x_{10}x_{13}-x_{11}x_{12},x_{11}x_{14}-x_{12}x_{13},\\
&x_{12}x_{15}-x_{13}x_{14},x_{13}x_{16}-x_{14}x_{15},x_{14}x_{17}-x_{15}x_{16},x_{15}x_{18}-x_{16}x_{17},x_{16}x_{19}-x_{17}x_{18}\},
\end{aligned}
\end{equation*}
its variable sparsity is $s_v(\pset{F})=|51|/\binom{2}{|19|} \approx 0.298$ by definition. 
\end{example}

\begin{example}\rm
  Consider the polynomial set
  $$ \pset{F} = \{x_1\cdots x_6-1, x_1^2+x_2, x_2^2+x_3, x_3^2+4, x_4^2+x_5, x_5^2+x_6\}.$$
  Then one can easily check that its variable sparsity $s_v(\pset{F}) = 1$ but the weighted variable sparsity $s_v^w(\pset{F}) = 20/\binom{2}{|6|} \approx 0.22$.
  This example effectively demonstrates the difference between the notions of the variable sparsity and the weighted one. 
\end{example}

\subsection{Algorithm for sparse triangular decomposition in top-down style}
\label{sec:sparseTD}

With the introduction of variable sparsity of polynomial sets above, what we have proved in Theorems \ref{thm:wang}, \ref{thm:chordalSRS}, and \ref{thm:chordalReg} for the three studied algorithms for triangular decomposition in top-down style can be further viewed in the following way. Given a polynomial set which is sparse with respect to the variables, if it is chordal and we use a perfect elimination ordering of its associated graph as the variable ordering, then all the polynomial sets in the process of triangular decomposition in top-down style with these algorithms keep to be sparse with respect to the variables. In a word, variable sparsity is preserved in these algorithms for triangular decomposition when the input polynomial set is chordal and sparse with respect to the variables.

Inspired by the utilization of chordal structures in Gaussian elimination of sparse matrices, the algorithm below for sparse triangular decomposition in top-down style is proposed. We use regular decomposition for example, and in fact one can use any of the three algorithms studied in this paper for the corresponding sparse triangular decomposition.

In Algorithm~\ref{alg:sparse} below, we modify the algorithm $\algReg()$ a little so it now takes two arguments, with the second as the variable ordering used for computing the regular decomposition of the first argument. The constant $s_0$ is a threshold for determining whether a polynomial set is sparse with respect to the variables, and one may take $s_0 = 0.3$ for example. As mentioned in the Section~\ref{sec:graph}, testing whether a graph is chordal and finding a chordal completion can be realized effectively, for example with the algorithms described in \cite{RTL76a} and \cite{BK08c} respectively.

\begin{algorithm}[!ht]
  \caption{Algorithm for sparse regular decomposition in top-down style
    $~~~~~~~~~~~~~~~~~~~~~~\Psi :=\algSparse(\pset{F})$}
\label{alg:sparse}

\small

\KwIn{$\pset{F}$, a polynomial set in $\kx$}

\KwOut{$\overline{\p{x}}$, a variable ordering of $x_1, \ldots, x_n$; $\Psi$, a set of finitely many regular systems which form a regular decomposition of $\pset{F}$ with respect to $\overline{\p{x}}$}

\BlankLine

$s_v(\pset{F}) :=$ variable sparsity of $\pset{F}$\;

\eIf{$s_v(\pset{F}) < s_0$}
{
  \eIf{$G(\pset{F})$ is chordal}
  {
    $\overline{\p{x}} :=$ perfect elimination ordering of $G(\pset{F})$\;
  }
  {
    $G' :=$ a chordal completion of $G(\pset{F})$\;
    $\overline{\p{x}} :=$ perfect elimination ordering of $G'$\;
  }  
}
{
  $\overline{\p{x}} :=$ a random ordering of $x_1, \ldots, x_n$\;
}
$\Psi := \algReg(\pset{F}, \overline{\p{x}})$\;

\Return $(\Psi, \overline{\p{x}})$\;
      
\end{algorithm}

\subsection{Preliminary experimental results}
\label{sec:exp}

In this part we report some preliminary experimental results with the proposed algorithm for sparse triangular decomposition in top-down style (Algorithm~\ref{alg:sparse}) applied to two sparse and chordal polynomial sets in \cite{C2017c}. Note that these two polynomial systems are very special, and we mainly want to demonstrate the increasing effectiveness of Algorithm~\ref{alg:sparse} when the input polynomial sets become sparser with respect to the variables and to provide theoretical explanations for the experimental observation in \cite{C2017c}. More experiments with the algorithm on sparse and chordal polynomial sets are our future work.

We apply Algorithm~\ref{alg:sparse} to polynomial sets of different sizes of these two kinds to compute the regular decomposition of them. Here the function {\sf RegSer} in the software pacakge {\sc Epsilon} (version 0.618) for the computer algebra system {\sc Maple} is called as the implementation of the algorithm for regular decomposition in top-down style (Algorithm~\ref{alg:reg}). We also try random variable orderings for regular decomposition of the same polynomial sets and compare the timings. All these experiments are carried out on a laptop with the following resources: 1.6GHz Intel Core i5 CPU, 4GB 1600MHz DDR3 memory, macOS Sierra 10.12.5 operating system, and {\sc Maple} version 2017.0.

\subsubsection{Lattice reachability problem}
\label{sec:exp-lattice}

The following polynomial set from a lattice reachability problem is described in \cite{DES98l}. 
\begin{equation} \label{lizi1}
  \pset{F}_i:=\{x_{k}x_{k+3}-x_{k+1}x_{k+2}: k=1,2,\ldots,i \ \},\quad i\in \mathbb{Z}.
\end{equation}
The associated graph of $\pset{F}_i$ is shown in Figure~\ref{fig:lattice}~(left). One can find from this figure that $G(\pset{F}_i)$ is indeed chordal. It is easy to verify that the variable sparsity for $\pset{F}_i$ is 
\begin{equation*} 
s_v=\frac{6n-12}{n^2-n},
\end{equation*}
where $n$ is the number of variables in $\pset{F}_i$. This expression shows that the variable sparsity decreases as the number of variables grows. In particular, when $n \geq 17$, we have $s_v(\pset{F}_i) \leq 0.3$.

We pick $i=5, 7, 17, 22, 27, 32, 37$ in $\pset{F}_i$ in \eqref{lizi1} to form polynomial sets of different sizes and then apply Algorithm~\ref{alg:sparse} to compute their regular decomposition. The timings of our experiments are recorded in the following table (Table~\ref{tab:lattice}), where the column labeled with $n$ records the number of variables, that with $s_v$ is for the variable sparsity, that with $t_p$ is for timings of regular decomposition with respect to a perfect elimination ordering, that with $t_r$ is for timings with respect to random variable orderings, that with $\overline{t}_r$ is for the average timings of five random orderings, and that with $\overline{t}_r/t_p$ is for the speedup ratios.

\begin{figure}
  \label{tab:lattice}
      \centering
\includegraphics[width=1.6cm,keepaspectratio]{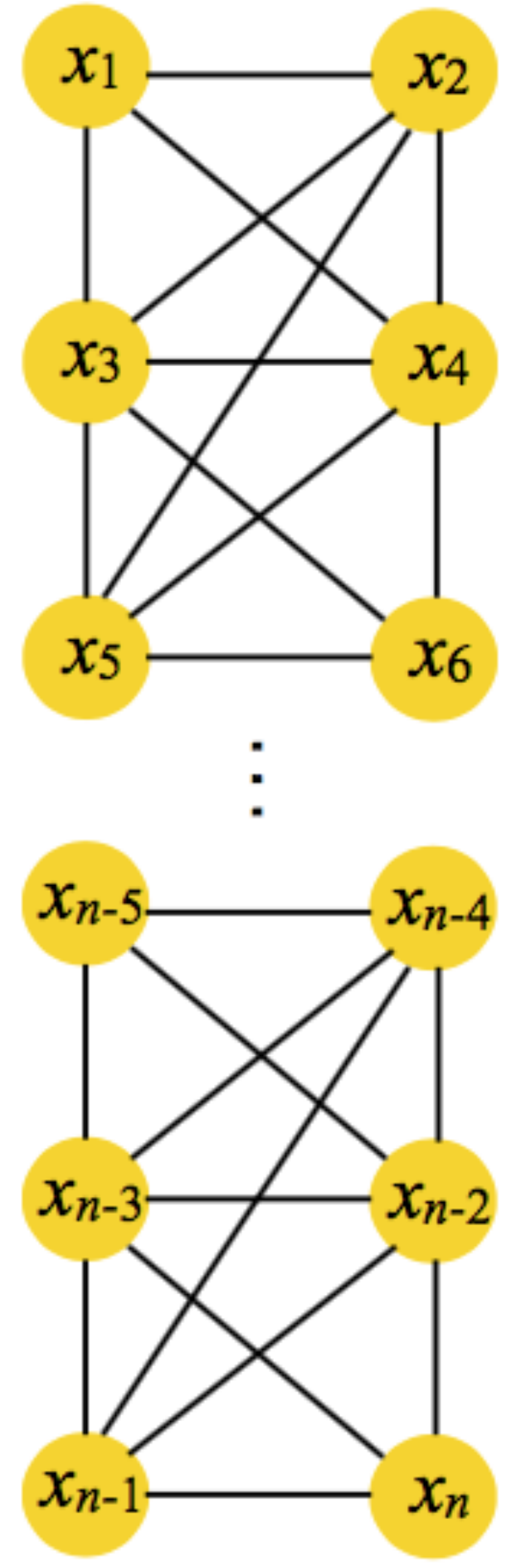}~\qquad\qquad\qquad \includegraphics[width=1.6cm,keepaspectratio]{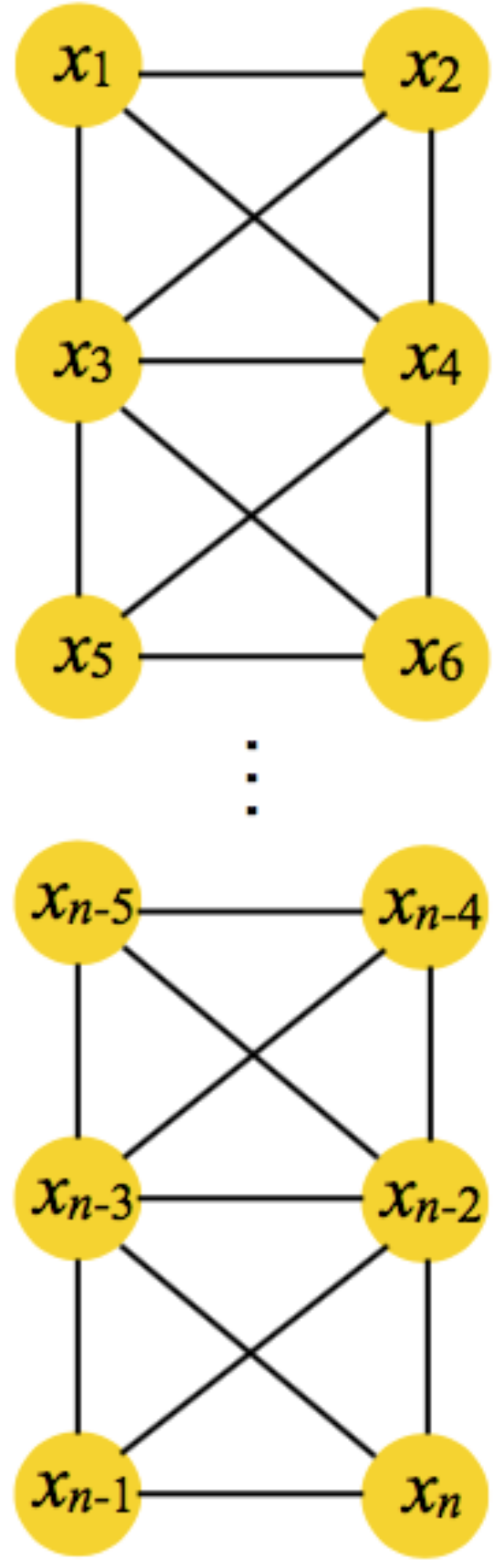}~
      \caption{Associated graphs $G(\pset{F}_i)$ (left) for $\pset{F}_i$ in \eqref{lizi1} and $G(\tilde{\pset{F}_i})$ (right) for $\tilde{\pset{F}_i}$ in \eqref{lizi2}}
      \label{fig:lattice}
    \end{figure}

    \begin{table}[!h]
      \caption{Timings (in seconds) for regular decomposition of $\pset{F}_i$ with different variable orderings (in top-down style)}
            \label{tab:lattice}

\begin{tabular}{cccccccccc}
\hline
$n$&$s_v$&$t_p$&\multicolumn{4}{c}{$\qquad\qquad\quad t_r$}&&$\overline{t}_r$&$\overline{t}_r/t_p$\\
\hline
8&0.64&0.11&0.10&0.09&0.05&0.06&0.09&0.10&0.91\\

10&0.53&0.19&0.14&0.21&0.22&0.11&0.21&0.18&0.95\\

20&0.28&1.44&4.24&4.45&3.15&4.41&4.65&4.18&2.90\\

25&0.23&4.25&50.62&20.29&15.55&25.01&35.10&29.31&6.90\\

30&0.19&11.94&177.37&185.94&130.54&142.97&103.42&148.05&12.40\\

35&0.17&42.33&560.56&291.35&633.43&320.98&938.45&548.95&12.97\\

40&0.15&161.11&1883.64&3618.04&4289.13&4013.99&2996.37&3360.23&20.86\\
\hline
\end{tabular}
\end{table}

As can be seen from Table~\ref{tab:lattice}, when the polynomial system is not sparse (say, for $n=8$ or $10$), regular decomposition in top-down style with respect to a perfect elimination ordering does not bring efficiency gains versus a random variable ordering. However, when the polynomial systems become sparser increasingly (see column $s_v$), the efficiency gains with choosing a perfect elimination ordering become more and more prominent. This means that, for this particular kind of polynomial sets, the variable sparsity is utilized by picking a good variable ordering (the perfect elimination ordering) in algorithms for triangular decomposition in top-down style to make them more efficient.

We also tried regular decomposition of some of these polynomial sets above with algorithms not in top-down style. Here the function {\sf Triangularize} in the built-in package {\sc RegularChains} in {\sc Maple} is called as an implementation of the algorithm for regular decomposition not in top-down style. The timings with these experiments are recorded in Table~\ref{tab:notTopDown}. In this table one can find that a perfect elimination ordering will not bring speedups against a random variable ordering for this particular kind of polynomial sets.

\begin{table}[!h]
  \caption{Timings (in seconds) for regular decomposition of $\pset{F}_i$ with different variable orderings (NOT in top-down style)}
  \label{tab:notTopDown}

\begin{tabular}{cccccccccc}
\hline
$n$&$s_v$&$t_p$&\multicolumn{4}{c}{$\qquad\qquad\quad t_r$}&&$\overline{t}_r$&$\overline{t}_r/t_p$\\
\hline
15&0.37&45.90&17.29&21.41&13.62&32.50&19.63&20.89&0.46\\

17&0.33&216.69&87.29&197.35&104.86&68.28&130.83&117.72&0.54\\

19&0.30&1303.08&415.90&308.37&780.75&221.75&831.15&511.58&0.39\\

21&0.27&8787.32&1823.29&2064.55&2431.49&1926.02&1593.36&1967.74&0.22\\
\hline
\end{tabular}
\end{table}

\subsubsection{Adjacent Minors}
\label{sec:exp-adjacent}

For the following polynomial set arising from problems related to adjacent minors of matrices
\begin{equation} \label{lizi2}
\tilde{\pset{F}}_i:=\{x_{2k-1}x_{2k+2}-x_{2k}x_{2k+1}: k=1,2,\ldots,i \},\quad i\in \mathbb{Z},
\end{equation}
one can find that its variable sparsity is 
\begin{equation*} \label{2xishudu}
s_v=\frac{5n-8}{n^2-n},
\end{equation*}
where $n$ is the number of variables in the polynomial set. This polynomial set and the expression of its variable sparsity are similar to those in Section~\ref{sec:exp-lattice}. We pick $i=3, 7, 9, 11, 13, 15$ in $\tilde{\pset{F}_i}$ in \eqref{lizi2} to form polynomial sets of different sizes for regular decomposition and the experimental results are reported in Table~\ref{tab:adjacent}. 

\begin{table}[!h]
    \caption{Timings (in seconds) for regular decomposition of $\tilde{\pset{F}}_i$ with different variable orderings (in top-down style)}
   \label{tab:adjacent}
\begin{tabular}{cccccccccc}
\hline
$n$&$s_v$&$t_p$&\multicolumn{4}{c}{$\qquad\qquad\quad t_r$}&&$\overline{t}_r$&$\overline{t}_r/t_p$\\
\hline
8&0.57&0.10&0.09&0.03&0.04&0.05&0.03&0.05&0.50\\

16&0.3&0.64&2.20&1.29&1.77&1.54&2.08&1.78&2.78\\

20&0.24&2.04&9.47&9.38&22.05&14.06&12.31&13.45&6.59\\

24&0.20&9.05&67.00&39.37&109.12&59.21&70.67&69.07&7.63\\

28&0.17&7.61&63.02&66.63&66.78&110.17&74.23&76.17&10.01\\

32&0.15&133.45&1173.39&3371.60&1457.92&1095.27&1469.47&1713.53&12.84\\
\hline
\end{tabular}
\end{table}

\begin{remark}\rm
  \label{rem:explain}
  At this point, we are able to theoretically explain the experimental observation in \cite{C2017c} that algorithms due to Wang become more efficient when the polynomial sets to decompose are chordal. Most of the algorithms for triangular decomposition due to Wang are in top-down style, and in this paper we have proven that a few algorithms for triangular decomposition in top-down style ``preserve'' chordality of the input polynomial sets. With the notion of variable sparsity of polynomial sets, this theoretical property means that, if the polynomial set is chordal and sparse with respect to the variables, the variable sparsity will be preserved throughout the process of triangular decomposition. The computation of triangular decomposition carried out in \cite{C2017c} with Wang's algorithms are mainly on chordal and sparse (with respect to the variables) polynomial sets and the perfect elimination orderings are used as the variable orderings (which is also important, as shown in Tables~\ref{tab:lattice} and \ref{tab:adjacent}), this scenario makes Wang's algorithms the so-called ``algorithms for sparse triangular decomposition'' in this paper, which make full use of the variable sparsity of polynomial sets and have remarkable efficiency gains in case of chordal and very sparse polynomial sets with respect to the variables. 
\end{remark}

{\small
\bibliographystyle{abbrv}
\bibliography{Chordal-extended}

\begin{thebibliography}{10}

\bibitem{ACP87c}
S.~Arnborg, D.~G. Corneil, and A.~Proskurowski.
\newblock Complexity of finding embeddings in a $k$-tree.
\newblock {\em SIAM Journal on Algebraic Discrete Methods}, 8(2):277--284,
  1987.

\bibitem{a99t}
P.~Aubry, D.~Lazard, and M.~Moreno~Maza.
\newblock On the theories of triangular sets.
\newblock {\em Journal of Symbolic Computation}, 28(1--2):105--124, 1999.

\bibitem{a99ta}
P.~Aubry and M.~Moreno~Maza.
\newblock Triangular sets for solving polynomial systems: {A} comparative
  implementation of four methods.
\newblock {\em Journal of Symbolic Computation}, 28(1--2):125--154, 1999.

\bibitem{B14c}
T.~B{\"a}chler.
\newblock {\em Counting Solutions of Algebraic Systems via Triangular
  Decomposition}.
\newblock PhD thesis, RWTH Aachen University, 2014.

\bibitem{B2011a}
T.~B\"{a}chler, V.~Gerdt, M.~Lange-Hegermann, and D.~Robertz.
\newblock Algorithmic {Thomas} decomposition of algebraic and differential
  systems.
\newblock {\em Journal of Symbolic Computation}, 47(10):1233--1266, 2012.

\bibitem{BFT18t}
M.~Bender, J.-C. Faug\`{e}re, and E.~Tsigaridas.
\newblock Towards mixed {Gr\"{o}bner} basis algorithms: The multihomogeneous
  and sparse case.
\newblock In C.~Arreche, editor, {\em Proceedings of ISSAC 2018}, pages 71--78.
  ACM, 2018.

\bibitem{BK08c}
H.~L. Bodlaender and A.~M. Koster.
\newblock Combinatorial optimization on graphs of bounded treewidth.
\newblock {\em The Computer Journal}, 51(3):255--269, 2008.

\bibitem{B1965A}
B.~Buchberger.
\newblock {\em {Ein Algorithmus zum Auffinden der Basiselemente des
  Restklassenrings nach einem nulldimensionalen Polynomideal}}.
\newblock PhD thesis, Universit{\"a}t Innsbruck, Austria, 1965.

\bibitem{B74a}
P.~Buneman.
\newblock A characterisation of rigid circuit graphs.
\newblock {\em Discrete mathematics}, 9(3):205--212, 1974.

\bibitem{c08c}
F.~Chai, X.-S. Gao, and C.~Yuan.
\newblock A characteristic set method for solving {Boolean} equations and
  applications in cryptanalysis of stream ciphers.
\newblock {\em Journal of Systems Science and Complexity}, 21(2):191--208,
  2008.

\bibitem{CM2012a}
C.~Chen and M.~Moreno~Maza.
\newblock Algorithms for computing triangular decompositions of polynomial
  systems.
\newblock {\em Journal of Symbolic Computation}, 47(6):610--642, 2012.

\bibitem{C2016e}
D.~Cifuentes and P.~A. Parrilo.
\newblock Exploiting chordal structure in polynomial ideals: A \grobner bases
  approach.
\newblock {\em SIAM Journal on Discrete Mathematics}, 30(3):1534--1570, 2016.

\bibitem{C2017c}
D.~Cifuentes and P.~A. Parrilo.
\newblock Chordal networks of polynomial ideals.
\newblock {\em SIAM Journal on Applied Algebra and Geometry}, 1(1):73--110,
  2017.

\bibitem{CLO1998U}
D.~A. Cox, J.~B. Little, and D.~O'Shea.
\newblock {\em {Using Algebraic Geometry}}.
\newblock Springer Verlag, 1998.

\bibitem{DES98l}
P.~Diaconis, D.~Eisenbud, and B.~Sturmfels.
\newblock Lattice walks and primary decomposition.
\newblock In {\em Mathematical Essays in Honor of Gian-Carlo Rota}, pages
  71--78. Springer, 1998.

\bibitem{F1999A}
J.-C. Faug\`{e}re.
\newblock A new efficient algorithm for computing {Gr\"{o}bner} bases
  (${F_4}$).
\newblock {\em Journal of Pure and Applied Algebra}, 139(1--3):61--88, 1999.

\bibitem{f03a}
J.-C. Faug{\`e}re and A.~Joux.
\newblock Algebraic cryptanalysis of hidden field equation ({HFE})
  cryptosystems using {Gr{\"o}bner} bases.
\newblock In D.~Boneh, editor, {\em Advances in Cryptology -- CRYPTO 2003},
  pages 44--60. Springer, 2003.

\bibitem{FM17}
J.-C. Faug\`{e}re and C.~Mou.
\newblock Sparse {FGLM} algorithms.
\newblock {\em Journal of Symbolic Computation}, 80(3):538--569, 2017.

\bibitem{FSS2014s}
J.-C. Faug\`{e}re, P.-J. Spaenlehauer, and J.~Svartz.
\newblock Sparse {Gr{\"o}bner} bases: The unmixed case.
\newblock In K.~Nabeshima and K.~Nagasaka, editors, {\em Proceedings of ISSAC
  2014}, pages 178--185. ACM, 2014.

\bibitem{GC1992s}
X.-S. Gao and S.-C. Chou.
\newblock Solving parametric algebraic systems.
\newblock In P.~Wang, editor, {\em Proceedings of ISSAC 1992}, pages 335--341.
  ACM, 1992.

\bibitem{g12c}
X.-S. Gao and Z.~Huang.
\newblock Characteristic set algorithms for equation solving in finite fields.
\newblock {\em Journal of Symbolic Computation}, 47(6):655--679, 2012.

\bibitem{GR2016l}
V.~Gerdt and D.~Robertz.
\newblock Lagrangian constraints and differential {Thomas} decomposition.
\newblock {\em Advances in Applied Mathematics}, 72:113--138, 2016.

\bibitem{G1994p}
J.~R. Gilbert.
\newblock Predicting structure in sparse matrix computations.
\newblock {\em SIAM Journal on Matrix Analysis and Applications}, 15(1):62--79,
  1994.

\bibitem{h2011a}
Z.~Huang and D.~Lin.
\newblock Attacking {Bivium} and {Trivium} with the characteristic set method.
\newblock In A.~Nitaj and D.~Pointcheval, editors, {\em Progress in Cryptology
  -- AFRICACRYPT 2011}, pages 77--91. Springer, 2011.

\bibitem{JLW2013a}
M.~Jin, X.~Li, and D.~Wang.
\newblock A new algorithmic scheme for computing characteristic sets.
\newblock {\em Journal of Symbolic Computation}, 50:431--449, 2013.

\bibitem{k93g}
M.~Kalkbrener.
\newblock A generalized {Euclidean} algorithm for computing triangular
  representations of algebraic varieties.
\newblock {\em Journal of Symbolic Computation}, 15(2):143--167, 1993.

\bibitem{LI2010D}
X.~Li, C.~Mou, and D.~Wang.
\newblock Decomposing polynomial sets into simple sets over finite fields: The
  zero-dimensional case.
\newblock {\em Computers \& Mathematics with Applications}, 60(11):2983--2997,
  2010.

\bibitem{M2012A}
B.~Mishra.
\newblock {\em Algorithmic algebra}.
\newblock Springer-Verlag New York, 1993.

\bibitem{MB18o}
C.~Mou and Y.~Bai.
\newblock On the chordality of polynomial sets in triangular decomposition in
  top-down style.
\newblock In C.~Arreche, editor, {\em Proceedings of ISSAC 2018}, pages
  287--294. ACM, 2018.

\bibitem{M2012d}
C.~Mou, D.~Wang, and X.~Li.
\newblock Decomposing polynomial sets into simple sets over finite fields:
  {The} positive-dimensional case.
\newblock {\em Theoretical Computer Science}, (468):102--113, 2013.

\bibitem{P1961t}
S.~Parter.
\newblock The use of linear graphs in {Gauss} elimination.
\newblock {\em SIAM Review}, 3(2):119--130, 1961.

\bibitem{R1970t}
D.~J. Rose.
\newblock Triangulated graphs and the elimination process.
\newblock {\em Journal of Mathematical Analysis and Applications},
  32(3):597--609, 1970.

\bibitem{RTL76a}
D.~J. Rose, R.~E. Tarjan, and G.~S. Lueker.
\newblock Algorithmic aspects of vertex elimination on graphs.
\newblock {\em SIAM Journal on computing}, 5(2):266--283, 1976.

\bibitem{WKKM06s}
H.~Waki, S.~Kim, M.~Kojima, and M.~Muramatsu.
\newblock Sums of squares and semidefinite program relaxations for polynomial
  optimization problems with structured sparsity.
\newblock {\em SIAM Journal on Optimization}, 17(1):218--242, 2006.

\bibitem{WM10a}
H.~Waki and M.~Muramatsu.
\newblock A facial reduction algorithm for finding sparse sos representations.
\newblock {\em Operations Research Letters}, 38(5):361--365, 2010.

\bibitem{w93e}
D.~Wang.
\newblock An elimination method for polynomial systems.
\newblock {\em Journal of Symbolic Computation}, 16(2):83--114, 1993.

\bibitem{w98d}
D.~Wang.
\newblock Decomposing polynomial systems into simple systems.
\newblock {\em Journal of Symbolic Computation}, 25(3):295--314, 1998.

\bibitem{w00c}
D.~Wang.
\newblock Computing triangular systems and regular systems.
\newblock {\em Journal of Symbolic Computation}, 30(2):221--236, 2000.

\bibitem{W2001E}
D.~Wang.
\newblock {\em {Elimination Methods}}.
\newblock Springer-Verlag, Wien, 2001.

\bibitem{WLX18e}
J.~Wang, H.~Li, and B.~Xia.
\newblock Exploiting sparsity in {SOS} programming and sparse polynomial
  optimization.
\newblock {\em arXiv preprint arXiv:1809.10848}, 2018.

\bibitem{w86z}
W.-T. Wu.
\newblock On zeros of algebraic equations: {An} application of {Ritt}
  principle.
\newblock {\em Kexue Tongbao}, 31(1):1--5, 1986.

\bibitem{Wu94m}
W.-T. Wu.
\newblock {\em {Mechanical Theorem Proving in Geometries: Basic Principles}}.
\newblock Springer-Verlag, Wien [Translated from the Chinese by X. Jin and D.
  Wang], 1994.

\bibitem{ZFP18s}
Y.~Zheng, G.~Fantuzzi, and A.~Papachristodoulou.
\newblock Sparse sum-of-squares ({SOS}) optimization: {A} bridge between
  {DSOS}/{SDSOS} and {SOS} optimization for sparse polynomials.
\newblock {\em arXiv preprint arXiv:1807.05463}, 2018.

\end{thebibliography}
}

\end{document}